\setlist[itemize]{topsep=3pt}
\setlist[enumerate]{topsep=3pt}
\newlang{\ApxSim}{ApxSim}
\newlang{\LHPlusE}{LHPlusE}
\newlang{\LH}{LH}
\newlang{\kSSH}{kSSH}
\theoremstyle{plain}
\newtheorem{theorem}{Theorem}
\newtheorem{corollary}[theorem]{Corollary}
\newtheorem{lemma}[theorem]{Lemma}
\newtheorem{claim}[theorem]{Claim}
\newtheorem{fact}[theorem]{Fact}
\newtheorem{definition}[theorem]{Definition}
\newtheorem{problem}[theorem]{Problem}
\newtheorem{remark}[theorem]{Remark}
\theoremstyle{plain}
\DeclarePairedDelimiterXPP{\bigo}[1]{O}{(}{)}{}{#1}
\DeclarePairedDelimiterXPP{\bigomega}[1]{$\Omega$}{(}{)}{}{#1}
\newcommand{\defeq}{\stackrel{\mathrm{\scriptscriptstyle def}}{=}}
\renewcommand{\poly}{\textnormal{poly}}
\titlespacing{\subsection}{0pt}{1.5ex}{0ex}
\titlespacing{\subsubsection}{0pt}{1.5ex}{0ex}
\titlespacing{\subsubsection}{0pt}{1ex}{0ex}
\titlespacing{\paragraph}{0pt}{1.5ex}{1ex}
\title{On the Complexity of Decoded Quantum Interferometry}
\author[1]{Kunal Marwaha\thanks{\href{mailto:kmarw@uchicago.edu}{kmarw@uchicago.edu}}}
 \author[1]{Bill Fefferman}
 \author[2]{Alexandru Gheorghiu}
 \author[2]{Vojtech Havlicek}
\affil[1]{University of Chicago}
\affil[2]{IBM Quantum}
\date{\vspace{-5.5em}}
\begin{document}
\maketitle
\begin{abstract}
We study the complexity of Decoded Quantum Interferometry (DQI), a
quantum algorithm for approximate optimization~\cite{dqi}. First, we show that the algorithm resists classical simulation strategies based
on locating outputs with large probabilities. 
We then prove that DQI can be simulated at a low level of the
polynomial hierarchy, posing challenges to standard
quantum supremacy arguments. 
We further show that DQI is a constructive solution to a classical coding-theoretic bound
based on the MacWilliams identity. 
Lastly, we interpret DQI as preparing low-energy states of a quantum simple harmonic oscillator, a viewpoint we believe suggests a physics-motivated route to generalizing DQI. 
\end{abstract}
\vspace{1em}

\section{Introduction}

\subsection{Background}

Exponential quantum speedups are the primary motivation for building
quantum computers, yet examples are rare.
Prominent instances include Shor's algorithm for factoring and
computing discrete logarithms~\cite{shor1994algorithms}, sampling from
certain families of random quantum circuits
\cite{aaronson2011computational,bremner2011classical,
aaronson2016complexity,bouland2019complexity,morimae2019fine},
and cryptographic protocols based on interactive tests of quantumness
\cite{brakerski2021cryptographic,kahanamoku2022classically}.
In each case, the classical hardness of the associated computational
task is justified either by established cryptographic assumptions
(e.g.\ the hardness underlying RSA) or by widely believed
complexity-theoretic assumptions such as the non-collapse of the
polynomial hierarchy.

It is natural to ask where one might find
additional examples of
exponential quantum speedups. One promising direction originates from
\emph{Regev's reduction} or the \emph{quantum decoding problem}.
The idea is to use a decoder for a linear code $C$ in order to
find small codewords in its dual code $C^\perp$
\cite{aharonov2003latticeproblemquantumnp,regev2009lattices,
Chailloux:2024dll}.
Regev used this to provide evidence that the Learning With Errors (LWE) problem, which underlies much of post-quantum cryptography, resists quantum attacks. This is because an efficient solver for LWE (i.e. solving a decoding problem), either classical or quantum, can be turned into an efficient quantum algorithm for the Shortest Vector Problem (SVP). The latter is known to be \textsf{NP}-hard for certain parameter choices, making it unlikely that an efficient (quantum) algorithm exists.
More recently, Yamakawa and Zhandry used the same quantum reduction idea to construct a random oracle problem exhibiting an exponential quantum speedup
\cite{yamakawa2024verifiable}.
This has since inspired several candidate protocols for
verifiable quantum advantage
\cite{Chailloux:2024dll,quantumfire,sahai_yz}.

\emph{Decoded Quantum Interferometry} (DQI) is one of the algorithms inspired by this idea
\cite{dqi}.
It finds an approximate solution to a combinatorial optimization problem called
\emph{Max-LINSAT}: given a matrix
$B\in\mathbb{F}_p^{m\times n}$ over a finite field $\mathbb{F}_p$ and
sets $F_1,\dots,F_m\subseteq\mathbb{F}_p$,  find a vector $x\in\mathbb{F}_p^n$ satisfying as many
constraints of the form $\sum_j B_{ij}x_j\in F_i$.
The focus here is on the overdetermined regime $m>n$, where the system contains
more constraints than variables 
\cite{trevisan2014inapproximability,kramer2026tight}.

DQI approaches this problem as follows:
The matrix $B$ is interpreted as the generator matrix of a linear code
$C$, so that $B^T$ is the parity-check matrix of the dual code
$C^\perp$.
A key observation in~\cite{dqi} is that the expected number of
constraints satisfied by the DQI output is related to the decoding
radius of the dual code, a relationship referred to as the
\emph{semicircle law}.
Informally, the more errors that can be decoded in the dual code,
the larger the fraction of constraints that can be satisfied for the
primal problem. This means that DQI produces solutions whose expected number of satisfied
constraints depends on the performance of an efficient decoder for the
dual code. In \cite{dqi}, the authors present evidence that DQI achieves a higher satisfying fraction than known efficient classical algorithms. DQI works in a regime where {Max-LINSAT} is not known to be $\mathsf{NP}$-hard but also not obviously efficiently classically solvable.  For certain instances of Max-LINSAT that can be solved efficiently by DQI, in particular the variant known as
\emph{Optimal Polynomial Intersection} (OPI), the authors of
\cite{dqi} conjecture that no polynomial-time classical algorithm
matches the performance achieved by DQI. 

\subsection{Results} 
There has been much activity in understanding how DQI and its variants lead to quantum advantage~\cite{ct_softdecoders, bu2025decodedquantuminterferometrynoise, schmidhuber2025hamiltonian, anschuetz2025decoded, parekh2025no, kothari2025no, khattar2025verifiable}.  However, it remains open whether DQI admits a rigorous separation from
classical algorithms. We contribute to this research thread by clarifying the limits of classical simulation strategies for DQI, isolating the possible source of quantum speedup, and providing new interpretations of the
algorithm from coding theory and physics:

\begin{enumerate}
\item 
\emph{DQI resists classical simulation algorithms based on sparsity.} 

In Ref.~\cite{svdn13}, Schwarz and van den Nest showed that if the output
distribution of certain quantum algorithms contain 
 $1/\poly(n)$-large probabilities (peaks), then the
peaks can be found using an efficient randomized algorithm
\cite{goldreich1989hard,kushilevitz1991learning}.
 We show that the same simulation strategy works for DQI, i.e. any output probability that is at least
$1/\poly(n)$ can be found in classical polynomial-time. However, we then prove that no such heavy outputs typically exist in the output distributions of DQI.

\item
\emph{Obstructions to DQI-based quantum supremacy.} 

Several classes of quantum circuits, such as \textsf{IQP} or random circuits, are believed to be
classically hard to sample from because the existence of an efficient classical sampler would have unlikely computational complexity consequences, such as a collapse of the polynomial hierarchy
\cite{bremner2011classical,morimae2019fine, aaronson2011computational}. Here we identify an obstruction to using this argument for hardness of sampling from DQI distributions: the output probabilities can be computed by a polynomial-time classical algorithm. 
In fact, for any Max-LINSAT instance, we can sample from the output distribution of DQI up to multiplicative error
using
a polynomial-time classical algorithm with an \textsf{NP} oracle, i.e. $\BPP^\NP$. 
Note that this is not the same as solving the Max-LINSAT instance outright using the $\NP$ oracle. Instead, our algorithm can sample the output distribution of DQI.

\item
\emph{DQI implements a coding-theoretic bound with no known efficient classical algorithm.}

We show that DQI over $\mathbb{F}_2$ realizes a classical result
from coding theory relating the covering radius of a linear code $C$
to the decoding threshold of its dual code $C^\perp$.
The existence of solutions achieving the same performance as DQI appeared in
the coding theory literature as early as 1990~\cite{tietavainen}. 
However, the associated proof is not constructive.
We show that this proof can be interpreted as a polynomial-size
\emph{quantum circuit} equivalent to DQI.

\item
\emph{DQI prepares a truncated quantum harmonic oscillator state.}

We express DQI over $\mathbb{F}_2$ as a quantum harmonic oscillator embedded in $\mathbb{F}_2^m$.
The embedding restricts the oscillator to eigenstates below an energy
threshold determined by the decoding radius. We show that the optimal DQI state is then localized near the largest position allowed by
the energy threshold. 
We suggest a method to generalize DQI by adjusting the parameters of the quantum oscillator.

\end{enumerate}

\subsection{Organization}
In \Cref{sec:preliminaries} we review the coding-theoretic and
algorithmic background needed for DQI.
In \Cref{sec:peakfinding} we analyze sparsity-based classical simulation
strategies.
In \Cref{sec:stockmeyer} we place DQI sampling in the polynomial
hierarchy.
In \Cref{sec:semicirclelaw} we relate DQI to a
coding-theoretic bound, and in \Cref{sec:harmonic-oscillator} we interpret the
DQI state as a truncated quantum harmonic oscillator. We conclude by discussing our results in \Cref{sec:discussion}.


\section{Preliminaries}
\label{sec:preliminaries}

\subsection{Coding theory}

We briefly recall the coding-theoretic terminology used in this paper;
see, e.g., \cite{codingtheorynotes_venkat,basu2023codingtheory} for background. Let $\mathbb{F}$ be a finite field.
A \emph{linear code} $C \subseteq \mathbb{F}^n$ of dimension $k$ is a
$k$-dimensional linear subspace of $\mathbb{F}^n$.
Equivalently, $C$ can be written as
\[
C = \{Gx : x \in \mathbb{F}^k\},
\]
where $G \in \mathbb{F}^{n\times k}$ is called a \emph{generator matrix}.
Such a code is referred to as an $[n,k]$ code. 
The \emph{distance} of a linear code is the minimum Hamming weight of a
nonzero codeword.
An $[n,k,d]$ code is a code of blocklength $n$, dimension $k$, and
distance $d$.
Every linear code $C \subseteq \mathbb{F}^n$ admits a
\emph{parity-check matrix}
$H \in \mathbb{F}^{(n-k)\times n}$ such that
\[
C = \{c \in \mathbb{F}^n : Hc = 0\}.
\]
Given $y \in \mathbb{F}^n$, the vector $Hy$ is called the
\emph{syndrome} of $y$. The \emph{dual code} of $C$ is defined as
\[
C^\perp := \{z \in \mathbb{F}^n : z \cdot c = 0
\;\; \forall c \in C\}.
\]
If $H$ is a parity-check matrix for $C$, then $H^T$
generates $C^\perp$.

\subsection{Decoded Quantum Interferometry}
\label{sub:setup}

Decoded Quantum Interferometry (DQI), introduced in
\cite{dqi}, is a quantum algorithm for approximately solving overdetermined linear constraint satisfaction problems. It solves the following two search optimization problems, Max-XORSAT and its generalization, Max-LINSAT.

\begin{problem}[Max-XORSAT search]
\label{problem:xorsatsearch}
Given $B \in \mathbb{F}_2^{m\times n}$ and $v \in \mathbb{F}_2^m$,
find $x \in \mathbb{F}_2^n$ maximizing the number of satisfied equations
in
\[
Bx = v \pmod 2 .
\]
Equivalently, maximize the objective function:
\begin{align}
\label{eqn:def_cost}
f(x) \defeq
\sum_{i=1}^m (-1)^{\sum_{j=1}^n B_{ij}x_j + v_i}\,.
\end{align}
\end{problem}
DQI  \cite{dqi} attacks Problem~\ref{problem:xorsatsearch} as follows: observe that $f(x)$ has Fourier sparsity at most $m$.
Writing $b_i \in \{0,1\}^n$ for the $i$-th row of $B$, we have
\[
f(x) = \sum_{i=1}^m (-1)^{v_i} (-1)^{b_i \cdot x}
= \sum_{i=1}^m (-1)^{v_i} \chi_{b_i}(x),
\]
where $\chi_a(x) = (-1)^{a \cdot x}$ is a character of $\mathbb{Z}_2^n$.
Thus $f$ is supported on at most $m$ Fourier characters. 
This enables efficient preparation of the state
\[
\frac{1}{\sqrt{m}} \sum_{i=1}^m (-1)^{v_i} \ket{b_i},
\]
whose Hadamard transform is
\begin{align}
H^{\otimes n}
\frac{1}{\sqrt{m}} \sum_{i=1}^m (-1)^{v_i} \ket{b_i}
&=
\frac{1}{\sqrt{2^n m}}
\sum_{x \in \{0,1\}^n}
\sum_{i=1}^m
(-1)^{b_i \cdot x + v_i}
\ket{x} =
\frac{1}{\sqrt{2^n m}}
\sum_{x \in \{0,1\}^n}
f(x)\ket{x}.
\end{align}

Sampling this state in the computational basis yields a sample $x$ with probability proportional to $|f(x)|^2$; this increases the chance of obtaining a large objective value compared to uniformly random choice of $x$. We remark that the Fourier-sparsity of Eq.~\ref{eqn:def_cost} was a key motivation for this work; and especially studying simulation of DQI using the Kushilevitz-Mansour algorithm in Sec.~\ref{sec:peakfinding}.

Following the exposition in \cite{dqi}, DQI amplifies this signal by preparing a state: \begin{align}
\label{eqn:dqi_state}
\ket{\psi_P}
\propto
\sum_{x \in \{0,1\}^n} P(f(x)) \ket{x},
\end{align} where $P$ is an appropriately normalized degree-$\ell$ polynomial. This proceeds as follows: 

\begin{enumerate}
    \item Prepare a weighted superposition of Dicke states: $ \ket{\psi} = \sum_k w_k \ket{D_{m,k}}$, where: \begin{align} \ket{D_{m,k}} = \sum_{y \in \mathbb{F}_2^m, |y|=k} \frac{1}{\sqrt{\binom{m}{k}}}\ket{y} \end{align}
    We assume that the weights $w_k$ are known and explicitly computable.
    \item Phase-encode $v$ by applying $Z_1^{v_1} \otimes \ldots \otimes Z_m^{v_m}$
    \item Given the $m \times n$ matrix $B$, compute $B^Ty$ into an ancilla register to get:
    \begin{align}
       \sum_{k=0}^\ell w_k \sum_{y \in \mathbb{F}_2^m, |y|=k} \frac{(-1)^{v \cdot y}}{\sqrt{{m \choose k}}} \ket{y} \ket{B^T y}
    \end{align}
    \item Uncompute $y$ to get a state:
    \begin{align}
        \sum_{k=0}^\ell w_k \sum_{y \in \mathbb{F}_2^m, |y|=k} \frac{(-1)^{v \cdot y}}{\sqrt{{m \choose k}}} \ket{B^T y}
    \end{align}
    This is enabled by interpreting $B^T$ as a generator matrix. Let
\[
C^\perp = \{ d \in \mathbb{F}_2^m : B^T d = 0\}
\]
be the dual code associated with $B$.
Suppose we are given a decoder for $C^\perp$ capable of correcting up to $\ell$ errors. Starting from a superposition of the form
\[
\sum_{y:\,|y|\le \ell} \alpha_y \ket{y}\ket{B^T y},
\]
we use the decoder coherently to recover $y$ from $B^T y$.
More precisely, the decoder implements a reversible map that,
on inputs of the form $B^T y$ with $|y|\le \ell$, outputs $y$.
This allows us to apply the transformation
\[
\ket{y}\ket{B^T y} \;\longmapsto\; \ket{0}\ket{B^T y},
\]
thereby uncomputing the $y$ register. This is possible because for $|y|\le \ell$, the vector $y$
is the unique low-weight solution to $B^T y$, allowing the
decoder to invert the map $y \mapsto B^T y$ on this domain. As a result, the amplitudes depend only on the weight $|y|$,
and the resulting transformation corresponds to applying a
degree-$\ell$ polynomial $P$ to the objective value $f(x)$. 
    \item Apply $H^{\otimes n}$ to get $\ket{\psi_P}$ in Eq.~\ref{eqn:dqi_state}.
\end{enumerate}

The output of the algorithm is obtained by measuring
$\ket{\psi_P}$ in the computational basis. The choice of polynomial $P$ controls how amplitudes depend on the
objective value $f(x)$. Note that $P$ is not specified explicitly, but implicitly defined
by the choice of weights $w_k$. In particular, the resulting amplitudes
depend only on $f(x)$ and can be written as a degree-$\ell$ polynomial
function $P(f(x))$. We can write down the DQI state used in Max-XORSAT explicitly as a Fourier transform: 
\begin{align}
\label{eq:explicit_max_xorsat}
    \ket{\psi_P} &= H^{\otimes n} \sum_{k=0}^\ell \frac{w_k}{\sqrt{m \choose k}} \sum_{y \in \mathbb{F}_2^m, |y|=k} (-1)^{v \cdot y} \ket{B^T y},
\end{align}
where $\sum_{k=0}^\ell |w_k|^2 = 1$.
DQI extends naturally to linear constraint satisfaction over finite fields.

\begin{problem}[Max-LINSAT Search]
\label{problem:linsatsearch}
Given $B \in \mathbb{F}_p^{m\times n}$ and subsets
$F_1,\dots,F_m \subseteq \mathbb{F}_p$,
find $x \in \mathbb{F}_p^n$ maximizing the number of satisfied
constraints $(Bx)_i \in F_i,\, (1\le i\le m)$.
Equivalently, maximize:
\begin{align}
\label{eqn:def_linsat_cost}
f(x)
=
\sum_{i=1}^m f_i(b_i \cdot x),
\end{align}
where
\[
f_i(t) =
\begin{cases}
1 & t \in F_i,\\
-1 & t \notin F_i .
\end{cases}
\]
\end{problem}

Similarly to Max-XORSAT, the algorithm starts by preparing a DQI state that amplifies the output probability. We have that 
\begin{align}
    \ket{\psi_P} &= H_p^{\otimes n}  \sum_{k=0}^\ell w_k \ket{P^{(k)}} = H_p^{\otimes n} \sum_{k=0}^\ell \frac{w_k}{\sqrt{m \choose k}} \sum_{{\bf y}\in \mathbb{F}^m_p; |{\bf y}| = k} \prod_{i=1; y_i \neq 0}^m \tilde{g}_i(y_i) \ket{B^T {\bf y}}\,,
    \label{eq:maxlinsat_state}
\end{align}
where $H_p$ is the Fourier transform over $\mathbb{F}_p$, $\tilde{g}_i$ is a Fourier transform of a shifted and rescaled $f_i$ (see \cite[Section 8.2]{dqi}), and $\sum_{k=0}^\ell |w_k|^2 = 1$.

\subsubsection{Optimal Polynomial Intersection} 
OPI is a special case of Max-LINSAT where $p=m-1$ and $B$ generates a Reed-Solomon code. In \cite{dqi}, the authors show that DQI with the Berlekamp-Massey decoder achieves a higher satisfying fraction of constraints compared to Prange's algorithm, the best classical algorithm known to the authors. This points to the possibility of DQI achieving a superpolynomial speedup for this problem.

\section{Searching for peaks in DQI distributions}
\label{sec:peakfinding}
\subsection{Overview}
We show that all outcomes of a DQI distribution with probability at least
$1/\poly(n)$ (``peaks'') can be efficiently found by a classical algorithm. The argument proceeds in three steps. First, we show that the DQI state is \emph{computationally
tractable} (CT) in the sense of \cite{vdn09}. Second, we use the results of
Schwarz and van den Nest~\cite{svdn13} to show that a sequence of telescoping marginals of the DQI distribution can be estimated to
inverse-polynomial additive error. We then use a peak-finding
algorithm from learning theory
\cite{goldreich1989hard,kushilevitz1991learning,svdn13}
to find all inverse-polynomially large probabilities in the distribution. 

While it may seem like this would allow us to classically approximately sample from the DQI output distribution, we argue that this is not the case. We show that for typical Max-LINSAT instances, the output distribution is flat and no outcome carries inverse-polynomial probability mass. So, this simulation strategy for DQI fails. It is interesting that the same situation occurs in Shor's algorithm, as highlighted in Ref.~\cite{svdn13}.

\subsection{Computational Tractability}
\begin{definition}[{Computational Tractability~\cite{vdn09}}]
An $n$-qubit quantum state $\ket{\psi}$ is \emph{computationally tractable} (CT) if: 
\begin{enumerate}
\item  For every $x\in\{0,1\}^n$, one can compute $\poly(n)$ many bits of $\braket{x}{\psi}$ in $\poly(n)$ time.
\item the distribution $\{x \textnormal{ w.p. } |\braket{x}{\psi}|^2\}$ can be classically sampled from in  $\poly(n)$ time.
\end{enumerate}
\label{Def:CT}
\end{definition}
The main technical tool is the following result, which shows that the overlap between two CT states can be efficiently estimated with additive error. For convenience, we provide a proof here:

\begin{lemma}[Overlap Lemma {\cite[Lemma 19]{svdn13}}]
Let $\ket{\psi}$ and $\ket{\phi}$ be CT states such that amplitudes
$\braket{x}{\psi},\braket{x}{\phi}$ are computable in time $C_\psi,C_\phi$
and samples from $|\braket{x}{\psi}|^2,|\braket{x}{\phi}|^2$ 
can be generated in time $S_\psi,S_\phi$.
Then $\braket{\psi}{\phi}$ can be estimated to additive error $\epsilon$
with probability $1-\delta$ in time:
\[
O\!\left(\frac{(C_\psi+C_\phi)(S_\psi+S_\phi) \cdot \log 1/\delta}{\epsilon^2}\right).
\]
\label{lemma:overlap}
\end{lemma}
\begin{proof}
Rewrite the overlap as:
\[
\braket{\psi}{\phi}
=
\mathbb{E}_{x\sim |\braket{x}{\psi}|^2}
\!\left[
\frac{\braket{x}{\phi}}{\braket{x}{\psi}}
I_{|\braket{x}{\psi}|^2 \ge |\braket{x}{\phi}|^2}
\right]
+
\mathbb{E}_{x\sim |\braket{x}{\phi}|^2}
\!\left[
\frac{\braket{\psi}{x}}{\braket{\phi}{x}}
I_{|\braket{x}{\psi}|^2 < |\braket{x}{\phi}|^2}
\right],
\]
where $I_E$ is an indicator on event $E$. 
Both estimators are bounded in $[-1,1]$ and can be evaluated using
the ability to compute amplitudes and sample from the corresponding
Born distributions $|\braket{x}{\psi}|^2, |\braket{x}{\phi}|^2$. By the Chernoff–Hoeffding bound,
$O(1/\epsilon^2 \cdot \log 1/\delta)$ samples suffice to estimate each expectation to
additive error $\epsilon$ with success probability $1 - \delta$.
The stated running time follows from the cost of sampling and
evaluating the amplitudes.
\end{proof}
 The following lemma shows that marginal probabilities of CT states can be estimated efficiently. The key idea is to express the marginal as the overlap of two CT states and apply \Cref{lemma:overlap}.
\begin{lemma}[\cite{svdn13}]
\label{lemma:ct_marginals}
    Let $\ket{\psi}$ be a CT state on $n$ qudits. Let $k \in \lbrace 0, \ldots n \rbrace$, $y \in \lbrace 0, \ldots d-1 \rbrace^{n-k}$, and let $M = \sum_{x \in \lbrace 0, \ldots d-1 \rbrace^k} \ket{x}\bra{x} \otimes \ket{y} \bra{y} $. Define: 
    \begin{align}
        H_d &= \frac{1}{\sqrt{d}} \sum_{x, k \in \lbrace 0, \ldots d-1 \rbrace} \exp\left[-\frac{2\pi i kx}{d}\right] \ket{x}\bra{k}.
    \end{align}Then there is a randomized classical algorithm that w.h.p. approximates $\bra{\psi} H_d^{\dagger \otimes n} M H_d^{\otimes n}  \ket{\psi}$
    to $\epsilon$ additive error in time that is polynomial both in $n$ and in $1/\epsilon^2$.
\end{lemma}
\begin{proof}
    We write the marginal projector as a swap test
between two appropriately transformed states (see \Cref{fig:tensoridentity}) to represent $\bra{\psi} H_d^{\dagger \otimes n} M H_d^{\otimes n}  \ket{\psi}$ as the overlap of states $H_d^{n-k}\ket{\psi,y}$ and $A H_d^{n-k} \ket{\psi,y}$, where $A$ swaps the last $n-k$ qudits with the previous $n-k$ qudits.
\begin{figure}[ht]
  \centering
  \includesvg[width=\textwidth]{hadamard_identity_rhs_v11}
    \caption{\small Expressing a marginal as an overlap of two CT states. 
    } 
    \label{fig:tensoridentity}
\end{figure}

We show that both $H_d^{\otimes n-k} \ket{\psi, y}$ and $A H_d^{\otimes n-k} \ket{\psi, y}$ states are CT:
    \begin{itemize}
        \item For $ H_d^{\otimes (n-k)}\ket{\psi,y}$, any amplitude: 
        \begin{align}
        \bra{x,z}H_d^{\otimes (n-k)}\ket{\psi,y} &= \bra{x}\ket{\psi} \frac{1}{\sqrt{d^{n-k}}}  \exp\left[ \frac{-2i\pi z \cdot y}{d}\right], \end{align}
        can be efficiently queried. The  distribution can be written as $\{(x,z) \textnormal{ w.p. } |\braket{x,z}{\psi,y}|^2\} = \{(x,z) \textnormal{ w.p. } |\braket{x}{\psi}|^2/d^{n-k}\}$. This can be sampled from by first drawing $x$ from $|\braket{x}{\psi}|^2$  (since $\ket{\psi}$ is a CT state) and then drawing $z \in \lbrace 0, \ldots d-1 \rbrace^{n-k}$ uniformly at random.
        \item Now we inspect $A H^{\otimes (n-k)}\ket{\psi,y}$. The amplitude at a string $x \defeq x_1 || x_2$ can be written as\footnote{$||$ denotes concatenation}:
        \begin{align} \bra{x_1,x_2,z}A\ket{\psi,y} = \bra{x_1,z}\ket{\psi} \exp\left[\frac{2i\pi x_2 \cdot y}{d}\right] \frac{1}{\sqrt{d^{n-k}}}. \end{align} The distribution is then $\{(x_1,x_2,z) \textnormal{ w.p. } |\braket{x_1,z}{\psi}|^2 / d^{n-k}\}$.  
        This can be sampled from by drawing $x_1,z$ from $|\braket{x_1,z}{\psi}|^2$ and then $x_2 \in \lbrace 0, \ldots d-1 \rbrace^{n-k}$ uniformly at random. 
    \end{itemize}
    The proof then follows by  \Cref{lemma:overlap}. 
\end{proof}
\begin{remark}
\label{cor:ct_hadamard_marginals}
Let $\ket{\psi}$ be a CT state. Then the marginal distribution over the
first $m$ qudits of the Born distribution
$\{x \text{ w.p. } p(x) = |\bra{\psi} H_d^{\otimes n} \ket{x}|^2\}$
can be approximated to inverse-polynomial additive error in polynomial
time.
\end{remark}

Computing the marginals of $\ket{\psi}$ allows one to find all $1/\poly(n)$-peaked probabilities in $|\braket{x}{\psi}|^2$. 
\begin{theorem}[{Peak-finding algorithm \cite{goldreich1989hard,kushilevitz1991learning,svdn13}}]
\label{thm:peakfinding}
Let $p$ be a probability distribution on $\{0,1, \ldots d-1\}^n$, and let:
\begin{align}
p_m(y) &= \sum_{x\in\{0,\ldots,d-1\}^m} p(x||y) \end{align} be the distribution of $p$ after marginalizing over the first $m$ bits. Suppose there is an efficient classical algorithm that for any $0 \le m \le n$ and $y \in \lbrace 0,\ldots,d-1 \rbrace^{n-m}$, w.h.p. computes $p_m(y)$ to $1/\poly(n)$ additive error. Then, for any threshold $\theta > 0$ and failure probability $\pi > 0$, there is a randomized classical $\poly(n, 1/\theta, \log(1/\pi))$-time algorithm that outputs a set of bitstrings $L$ with $|L| \le 2/\theta$, and with probability $1 - \pi$, the below conditions are satisfied:
\begin{enumerate}
    \item For all $y \in L$, we have $p(y) \geq \theta/2$.
    \item For all bitstrings $x \in \{0,1\}^n$, if $p(x) \ge \theta$ then $x \in L$.
\end{enumerate}
\end{theorem}
By \Cref{thm:peakfinding}, for any CT state $\ket{\psi}$, we can efficiently find peaks from the distribution induced by $H_d^{\otimes n} \ket{\psi}$, i.e. $\{x \text{ w.p. } p(x) = |\bra{\psi} H_d^{\otimes n} \ket{x}|^2\}$.

\subsection{Peak-finding algorithm}
\label{subsec:searching}
We now show that the Hadamard transform of the DQI state is CT, which allows us to apply the above machinery.
\begin{claim}
\label{claim:hadamard_dqi_is_ct}
Consider a Max-XORSAT DQI state $\ket{\psi_P}$ from \cref{eqn:dqi_state}, prepared with a \emph{classical} decoder. Then the state $H_2^{\otimes n} \ket{\psi_P}$ is CT.
\end{claim}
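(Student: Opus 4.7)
The plan is to compute an explicit Fourier-basis form for $H_2^{\otimes n}\ket{\psi_P}$, observe that its support is parametrized by vectors $y\in\mathbb{F}_2^m$ of Hamming weight at most $\ell$ (on which $y\mapsto B^T y$ is injective), and then verify both CT conditions directly by invoking the classical decoder of $C^\perp$.

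The first step is to expand $P(f(x))$ as a short linear combination of ``elementary symmetric monomials'' $\sum_{|y|=j}(-1)^{y \cdot (Bx+v)}$ for $j \le \ell$. The underlying identity is the one proven in \Cref{fact:count_elementary_sym_polynomials}: since each $s_i \defeq (-1)^{(Bx+v)_i}$ squares to $1$, Newton's identities show that $f(x)^k = p_1(s)^k$ lies in the span of the elementary symmetric polynomials $e_0(s),\dots,e_k(s)$, and the change of basis from $\{f^k\}_{k\le\ell}$ to $\{e_j\}_{j\le\ell}$ is a triangular $\poly(\ell,m)$-time linear algebra computation. This yields coefficients $\beta_0,\dots,\beta_\ell$ with $P(f(x)) = \sum_{j=0}^{\ell} \beta_j \sum_{|y|=j}(-1)^{y \cdot (Bx+v)}$. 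Applying $H_2^{\otimes n}$ and using $\sum_{x\in\mathbb{F}_2^n}(-1)^{(B^T y + z)\cdot x} = 2^n \cdot \mathbf{1}[B^T y = z]$ collapses each Fourier coefficient to a single term per $y$, giving
$$
H_2^{\otimes n}\ket{\psi_P} = \sqrt{2^n} \sum_{j=0}^{\ell} \beta_j \sum_{y\in\mathbb{F}_2^m,\,|y|=j}(-1)^{v \cdot y}\, \ket{B^T y}\,,
$$
which matches the pre-decoding DQI state in \cref{eqn:dqistate_with_ft}.

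The key structural observation I would then exploit is injectivity: for $|y|\le \ell < d^\perp/2$, the map $y \mapsto B^T y$ is one-to-one, since if $B^T y_1 = B^T y_2$ with both weights at most $\ell$, then $y_1+y_2\in C^\perp$ has weight strictly less than $d^\perp$, forcing $y_1=y_2$. Hence $\braket{z}{H_2^{\otimes n}\psi_P}$ is either $0$ or, for the unique $y$ with $|y|\le \ell$ and $B^T y = z$, proportional to $\beta_{|y|}(-1)^{v\cdot y}$. For CT condition (1), I will invoke the classical decoder on $z$, then verify its output $y$ satisfies both $|y|\le \ell$ and $B^T y = z$; verification failure certifies amplitude zero. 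For CT condition (2), I will precompute the norm $N \defeq \sum_{j=0}^\ell \binom{m}{j}|\beta_j|^2$, sample $j$ with probability $\binom{m}{j}|\beta_j|^2/N$, sample $y$ uniformly among weight-$j$ strings in $\mathbb{F}_2^m$, and output $z \defeq B^T y$. Injectivity makes the resulting output probability equal to $|\beta_j|^2/N = |\braket{z}{H_2^{\otimes n}\psi_P}|^2$, so the distribution is exact.

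The main obstacle is purely bookkeeping: organizing the Fourier expansion and checking that the coefficients $\beta_j$ can be computed in polynomial time. Everything else follows from unique decodability below $d^\perp/2$ together with a single deterministic verification step (recompute $B^T y$ and check the weight of $y$), since a classical decoder is deterministic and its outputs are efficiently checkable. Notably, the decoder is never invoked in superposition. This highlights that the ``quantumness'' of DQI for optimization must come from the final \emph{inverse} Hadamard of \Cref{alg:dqi}, which spreads this polynomial-support Fourier state into the computational-basis distribution $\mathcal{D}$ that DQI actually samples from.
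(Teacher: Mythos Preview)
Your proposal is correct and follows essentially the same route as the paper: write $H_2^{\otimes n}\ket{\psi_P}$ as a short superposition over $\ket{B^T y}$ with $|y|\le\ell$, use injectivity of $y\mapsto B^T y$ below the half-distance, then verify CT(1) via the classical decoder and CT(2) by sampling a weight $j$ and then a uniform weight-$j$ string. The only differences are that you derive the Fourier-basis form from scratch via \Cref{fact:count_elementary_sym_polynomials} (the paper simply cites \cite[Section 8]{dqi}), and you add an explicit verification step on the decoder output, which is a sensible robustness check the paper leaves implicit.
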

\begin{proof}
Let $\ket{\psi'_P} \defeq H_2^{\otimes n} \ket{\psi_P}$. Using Eq.~\ref{eq:explicit_max_xorsat}, this state  has the form
    \begin{align*}
        \ket{\psi'_P} = \sum_{k=0}^\ell  \frac{w_k}{\sqrt{ { m \choose k}}} \sum_{y \in \mathbb{F}_2^m, |y| = k} (-1)^{v \cdot y} \ket{B^T y}\,,
    \end{align*}
    for a known choice of weights $(w_0, w_1, \dots, w_\ell)$ such that $\sum_{k=0}^\ell |w_i|^2 = 1$. Any $y \in \mathbb{F}_2^m$ where $|y| \le \ell$ can be uniquely decoded from $B^T y$, so each basis state $\ket{B^T y}$ in $\ket{\psi'_P}$ is distinct. We verify CT-ness:
    \begin{enumerate}
        \item Given a basis state $\ket{s}$ where $s = B^T y$, we can use the DQI decoder to \emph{classically} infer $y$. The state $\ket{\psi'_P}$ is supported on $\ket{s}$ if and only if the decoder successfully infers some $y$ where $|y| \le \ell$. If the decoder fails, $\bra{s}\ket{\psi'_P} = 0$. Otherwise, $\bra{s}\ket{\psi'_P} = \frac{w_{|y|}}{\sqrt{ { m \choose |y| }}} (-1)^{v \cdot y}$.
\item To sample from the distribution  $\{x \textnormal{ w.p. } |\braket{x}{\psi'_P}|^2\}$, we first sample $k \in [0,1, \dots, \ell]$ according to  probability distribution $(|w_0|^2, |w_1|^2, \dots, |w_\ell|^2)$. Then, sample a uniformly random subset $S \subseteq [m]$ of size $|S| = k$. Let $y \in \mathbb{F}_2^m$ be the indicator vector of $S$; then return $B^T y$.\qedhere
    \end{enumerate}
\end{proof}

\begin{corollary}
\label{cor:peakfind_dqi}
Consider a DQI state $\ket{\psi_P}$ from \cref{eqn:dqi_state} prepared with a \emph{classical} decoder, and fix a polynomial $q$. Then there is a randomized classical $\poly(n)$-time algorithm that can w.h.p. output a list containing all bitstrings $s$ such that $|\bra{s}\ket{\psi_P}|^2 \ge 1/{q(n)}$.
\end{corollary}
We remark that the existence of an efficient peak-finding algorithm does not
imply that sampling from $\ket{\psi_P}$ is classically tractable.
This phenomenon is analogous to Simon's algorithm, Shor's algorithm,
and IQP circuits~\cite{vdn09,svdn13}, where heavy outputs can be
identified efficiently despite the conjectured hardness of sampling. 
We investigate the complexity of sampling from DQI distributions in
Sec.~\ref{sec:stockmeyer}.

\subsection{Are there peaks in DQI distributions?}

We proved in \Cref{subsec:searching} that whenever a DQI distribution contains a $1/\poly(n)$ peak, it can be located in polynomial time classically. We now study when DQI distributions contain peaks. First, we give examples of peaked DQI states. We then isolate a large family of DQI states for which the output distributions do not peak and thus resist the above classical simulation strategy. 

\subsubsection{Peaked DQI states}
\label{subsec:peaked}
We give two constructions of peaked DQI states.

\begin{lemma}
For any bitstring $x \in \{0,1\}^n$, there is a DQI state $\ket{\psi_P} = \ket{x}$
when $C^\perp = \lbrace 0^m, 1^m \rbrace$ is the repetition code.
\end{lemma}
\begin{proof}
In $C^\perp$, the distance $d^\perp = m$ is maximal. We choose $m = 2t+1$ odd, so $t < d^\perp/2$. Then:
    \begin{align}
        N_t := \sum_{k=0}^t {m \choose k} &= 2^{m-1}\,.
    \end{align}
    Since $|C| = 2^{m-n} = 2$, we have $m = n+1$, and so $N_t = 2^{m-1} = 2^n$. 
    The DQI state in Eq.~\ref{eqn:dqi_state} is:
     \begin{align}
         \ket{\psi_P} = H^{\otimes n} \sum_{k=0}^{t} \frac{w_k}{\sqrt{{m \choose k}}} \sum_{|y| = k}(-1)^{v \cdot y} \ket{B^Ty}
     \end{align}
    One such state uses weights $w_k = \sqrt{{m \choose k}/2^n}$:
    \begin{align}
        \ket{\psi_P} &= H^{\otimes n} \sum_{|y| \leq t} \frac{(-1)^{v \cdot y}}{\sqrt{2^n}}\ket{B^T y}.
    \end{align}
    Note that
    \begin{align}
        \braket{x}{\psi_P} &= 2^{-n} \sum_{|y| \leq t} (-1)^{y \cdot(v+Bx)} 
    \end{align}
    When $v = Bx$, then this value is $1$. Here, $\ket{\psi_P} = \ket{x}$,
   i.e. concentrating on a single outcome.
\end{proof}
\begin{lemma}
    For any amplitude $x \in \{0,1\}^n$, there is a DQI state $\ket{\psi_P} = \ket{x}$ when $C^\perp$ is the $[2^n-1, 2^{n}-n-1, 3]_2$ Hamming code.
\end{lemma}
\begin{proof}
    Since the distance of $C^\perp$ is $3$, we can correct at most one error, and
    $$
    \sum_{k=0}^1 {m \choose k} = m + 1 = 2^n\,.
    $$
    Then the DQI state in Eq.~\ref{eqn:dqi_state} is:
    $$
    \ket{\psi_P} = H^{\otimes n} \sum_{k=0}^1 \frac{w_k}{\sqrt{m \choose k}} \sum_{|y|=k} (-1)^{v \cdot y} \ket{B^T y}\,.
    $$
    One such state uses weights $w_k = \sqrt{{m \choose k}/2^n}$:
    $$
    \ket{\psi_P} = H^{\otimes n} \sum_{|y| \le 1} \frac{(-1)^{v \cdot y}}{\sqrt{2^n}} \ket{B^T y}\,.
    $$
    Note that
    $$
    \braket{x}{\psi_P} = 2^{-n} \sum_{|y| \le 1} (-1)^{y \cdot (v + Bx)}\,.
    $$
    When $v = Bx$, then this value is $1$. Here, $\ket{\psi_P} = \ket{x}$,
   i.e. concentrating on a single outcome.
\end{proof}

In both of these examples, peaks can be located by the algorithm in \Cref{subsec:searching}.
In fact, DQI does not offer quantum advantage on these instances.
In the first example, Prange's algorithm already finds the optimal solution.
In the second example, $m$ is superlinear in $n$, and so DQI finds solutions satisfying only $o(m)$ constraints more than random guessing. 
\begin{remark}
    When $m \le n + 1$, the problem can be solved optimally with Prange's algorithm. Choose any $n$ linearly independent constraints; there is exactly one vector $x \in \{0,1\}^n$ satisfying these constraints, which can be found by diagonalizing the constraint matrix. Then $x$ is optimal: if a solution satisfies all constraints, then $x$ will satisfy the remaining constraint.
\end{remark}
\begin{remark}
    Suppose $m$ is superlinear in $n$. We upper-bound the fraction of constraints that DQI is expected to satisfy. Recall that $\ell \le n$; then by \cite[Eq. 9]{dqi}, this is
    $$
    \left( \sqrt{\frac{\ell}{m} (1 - \frac{r}{p})} 
    + \sqrt{(1-\frac{\ell}{m}) \frac{r}{p}}
    \right)^2
    =  
    \left(o(1) + \sqrt{(1 - \frac{\ell}{m})\frac{r}{p}} \right)^2
    \le \frac{r}{p} + o(1)\,.
    $$
    So DQI is expected to satisfy the same asymptotic fraction of constraints as in random guessing.
\end{remark}

\subsubsection{DQI states with no peaks}
Here we show that DQI states for more interesting problem instances than those in Subsec.~\ref{subsec:peaked} produce output distributions with no $1/\poly(n)$ peaks. 

\begin{lemma}[Flatness criterion for DQI states]
\label{thm:flatness_criterion}
Let $\ket{\psi}$ be a Max-LINSAT DQI state (Eq.~\ref{eq:maxlinsat_state}) over $\mathbb{F}_p$, obtained by Fourier-transforming a state of the form
\[
\ket{\phi}
=
\sum_{\substack{z\in\mathbb{F}_p^m\\ |z|\le \ell}} c_z \ket{B^T z},
\qquad
\sum_{|z|\le \ell} |c_z|^2 = 1,
\]
where the map $z\mapsto B^T z$ is injective on $\{z:|z|\le \ell\}$.
Let
\[
N_\ell
:=
\#\{z\in\mathbb{F}_p^m:|z|\le \ell\}
=
\sum_{k=0}^{\ell}\binom{m}{k}(p-1)^k.
\]
Then for every $x\in\mathbb{F}_p^n$,
\[
|\langle x|\psi\rangle|^2 \le p^{-n} N_\ell.
\]
\end{lemma}

    \begin{proof}
Since the decoder corrects up to $\ell$ errors, the map
$z \mapsto B^T z$ is injective on $\{z : |z|\le \ell\}$.
Applying the Fourier transform $ H_p^{\otimes n}$ to $\ket{\phi}$ yields:
\[
\braket{x}{\psi}
=
p^{-n/2}
\sum_{|z|\le \ell} c_z \,\omega^{z\cdot (Bx)},
\qquad
\omega = e^{2\pi i/p}.
\]
Therefore
\[
|\braket{x}{\psi}|
\le
p^{-n/2}
\sum_{|z|\le \ell} |c_z|.
\]
By Cauchy--Schwarz,
\[
\left(\sum_{|z|\le \ell} |c_z|\right)^2
\le
N_\ell \sum_{|z|\le \ell} |c_z|^2
=
N_\ell.
\]
Hence
\begin{align*}
|\braket{x}{\psi}|^2 \le p^{-n} N_\ell. \tag*{\qedhere}
\end{align*}
\end{proof}

In particular, if there exists $\delta>0$ such that $N_\ell \le p^{(1-\delta)n}$,
then
the output distribution has no inverse-polynomially large peaks. We rely on the standard fact about Hamming ball volume and two fundamental bounds that relate the rate and distance of a code.
\begin{fact}[Hamming ball volume entropy bound \cite{van1992introduction}]
\label{fact:entropy_boud}
Let $N_\ell$ be the volume of a Hamming ball of size $\ell$ in $\mathbb{F}^m_p$. Then
    \begin{align}
        \log_p N_\ell \leq m H_p \left( \frac{\ell}{m} \right) + o(m).
    \end{align}
\end{fact}
\begin{fact}[Elias-Bassalygo]
\label{fact:elias}
At large enough $n$, every $[n,k,d]_q$ code $C$ satisfies the following inequality for all $0 \le d/n \le (1 - 1/q)$:
$$
\frac{k}{n} \le 1 - H_q \left(J_q\left(\frac{d}{n}\right)\right) + o(1)\,,
$$
where 
$$
J_q(x) = (1-1/q) \cdot \left( 1-\sqrt{1-\frac{qx}{q-1}}\right)\,,
$$
and
$H_q$ is the $q$-ary entropy function:
\[
H_q(x)
=
x\log_q(q-1)
-x \log_q x
-(1-x)\log_q(1-x)\,.
\]
\end{fact}

\begin{fact}[Plotkin]
\label{fact:plotkin}
    Every $[n,k,d]_q$ code $C$  with $d/n > 1 - 1/q$ satisfies the following inequality:
    $$
    q^k  \le \frac{qd}{qd - (q-1)n}\,.
    $$
\end{fact}

\begin{theorem}
\label{prop:entropy_gap_regimes}
Let $C^\perp$ be an $[m,m-n,d^\perp]_p$ code and suppose $\ell < d^\perp/2$. Then:

\begin{enumerate}
\item When $0 < d^\perp/m \le 1 - 1/p$: $N_{\ell} \le p^{n - m(\Delta - o(1))}$. Here $\Delta > 0$ for all $p\ge 2$. 

\item When $d^\perp/m \to 0$, then $N_\ell = p^{o(m)}$. 

\item When $d^\perp/m > 1 - 1/p$: $N_{\ell} \le p^{n - m(\Delta - o(1))}$.  Here $\Delta > 0$ when $p>2$, or when $d^\perp/m \nrightarrow 1$.
\end{enumerate}
\end{theorem}

\begin{proof}
We consider regimes of the relative distance $\delta^\perp := d^\perp/m$.

\begin{enumerate}
\item For $\delta^\perp \le 1 - 1/p$, \Cref{fact:elias} implies
$$
\frac{m-n}{m} \le 1 - H_p(J_p(\delta^\perp)) + o(1) \implies \frac{n}{m} \ge H_p(J_p(\delta^\perp)) - o(1)\,.
$$
Let $\Delta := H_p(J_p(\delta^\perp)) - H_p(\delta^\perp/2)$. By inspection, $\Delta >0$ for $p \ge 2$ and positive value of $\delta^\perp$ in this region. Then
$$
\frac{n}{m} \ge H_p(\delta^\perp/2) + \Delta - o(1)\,.
$$
From Fact~\ref{fact:entropy_boud}, we have $\log_p N_{\ell} \le m H_p(\ell/m) + o(m)$. Since $H_p$ is increasing in this region and $\ell < d^\perp/2$, we have
$$
\log_p N_{\ell} \le n - m \Delta + o(m)\,.
$$
This implies 
$$
N_{\ell} \le p^{n - m(\Delta - o(1))}\,.
$$

\item Suppose $\delta^\perp \to 0$.
Then $\ell/m \to 0$. Since $H_p(x) = O(x \log(1/x))$ as $x \to 0$, it follows from Fact~\ref{fact:entropy_boud} that
$\log_p N_\ell = o(m)$.
Thus $N_\ell = p^{o(m)}$. This is exponentially smaller than $p^n$ provided $m = O(n)$. 

\item Now we assume $d^\perp/m > 1 - 1/p$. \Cref{fact:plotkin} implies
$$
p^{m-n} \le \frac{p d^\perp}{p d^\perp - (p-1)m}\,.
$$
The right-hand side is at most $p d^\perp \le p m$, because the denominator is a positive integer. This implies
\begin{align}
\label{eq:detail_pt3}
m - n \le 1 + \log_p m \le o(m)\,.
\end{align}
We again consider the logarithm of $N_{\ell}$. Using the same steps as before,
$$
\log_p N_{\ell} \le m H_p(d^\perp/(2m)) + o(m)\,.
$$
Let $\Delta = 1 - H_p(d^\perp/(2m))$. Then using \cref{eq:detail_pt3},
$$
\log_p N_{\ell} \le m - m \Delta + o(m) \le n - m \Delta + o(m)\,.
$$
This implies
$$
N_{\ell} \le p^{n - m(\Delta - o(1))}\,.
$$
By inspection, $\Delta > 0$ for any value of $d^\perp/m$ in this range, \textit{except} at $p = 2$ and $d^\perp/m \to 1$.
In this case, \Cref{fact:elias} implies $2^{m-n} \le 2 + o(1)$, and so $m \le n + 1$. \qedhere
\end{enumerate}
\end{proof}
\Cref{prop:entropy_gap_regimes}, when combined with Lemma~\ref{thm:flatness_criterion}, isolates the regime in which DQI distributions contain no $1/\poly(n)$ peaks.
\begin{corollary}
\label{cor:flatness}
Let $\ket{\psi}$ be a Max-LINSAT DQI state (Eq.~\ref{eq:maxlinsat_state}) over $\mathbb{F}_p$.
Then for every $x\in\mathbb{F}_p^n$, $|\braket{x}{\psi}|^2 \le p^{-cn}$ for some constant $c > 0$, except if $m \le n + 1$, or $m$ is superlinear in $n$.
\end{corollary}

\begin{remark}[Sharpness and perfect codes] 
Suppose the dual code $C^\perp$ is a perfect code, with decoding radius
\[
t=\left\lfloor \frac{d^\perp-1}{2}\right\rfloor,
\]
and take $\ell=t$. Then the Hamming balls of radius $t$ around codewords of $C^\perp$
partition $\mathbb{F}_p^m$, so
\[
|C^\perp|\,N_t = p^m.
\]
Since $C^\perp$ is an $[m,m-n,d^\perp]_p$ code, we have
\[
|C^\perp| = p^{m-n},
\]
and therefore
\[
N_t = p^n.
\]
In this case, Lemma~\ref{thm:flatness_criterion} yields only the trivial bound
\[
|\langle x|\psi\rangle|^2 \le 1.
\]
Here, DQI states can exhibit peaked output distributions, as in the constructions
of Subsec.~\ref{subsec:peaked}.
\end{remark}

Results of Subsection~\ref{subsec:searching} show that if a distribution
admits outcomes with inverse-polynomial probability, then such
``peaked'' outcomes can be efficiently identified using the
Kushilevitz--Mansour algorithm and its extensions~\cite{svdn13}.
At first sight, this suggests a possible classical simulation strategy
for DQI. However, the above analysis shows that DQI Max-LINSAT output distributions are, bar a few exceptions, inherently flat: every outcome has probability at most
$p^{-\Delta n}$ for some constant $\Delta>0$. In particular, there are no
inverse-polynomially heavy outcomes and the peak-finding framework is inapplicable in this setting.
Rather than contradicting the possibility of classical simulation,
our results identify a structural obstruction: any successful
dequantization of DQI must exploit features other than the presence
of heavy Fourier components or large output probabilities. It is interesting that this almost exactly mirrors the situation of Shor's algorithm studied in \cite{svdn13}. 

The typical ``flatness'' of the output distributions suggests that they may be \emph{anticoncentrated}. This raises the question of whether one can use the usual Aaronson-Arkhipov construction \cite{aaronson2011computational} to argue for hardness of sampling from DQI distributions. In \Cref{sec:stockmeyer} we outline obstructions to such a hardness of sampling argument.

\subsubsection{Outputs of DQI are not too rare}
We remark that since typical DQI distributions lack peaks, the bitstrings output by DQI are not rare. For essentially any output of the DQI distribution, there are \textit{exponentially many} codewords with solution quality matching this output. This implies,  for example, that DQI cannot solve CSP instances with a small number of planted satisfying assignments.
\begin{claim}[No ``unusually good'' outcomes from DQI]
\label{claim:unusuallygooddqi}
Suppose $m > n+1$ and $m$ is linear in $n$.
For any DQI distribution $\mathcal{D}$ over $\mathbb{F}_p$, there exists a $c > 0$, where an output $x$ sampled from $\mathcal{D}$ has $p^{cn}$ other outputs $x'$ with the same score as $x$, except with at most $p^{-cn}$ probability.
\end{claim}
\begin{proof}
Let the score of any output $x$ be $f(x)$; this is an integer between $0$ and $m$. Let $S_z$ be the set of outcomes $x$ with score $z = f(x)$. A DQI state has the same amplitude for any outcome with the same score; let $q(z)$ be the probability when the score is $z$. Then
$$
1 = \sum_{z=0}^m |S_z| q(z)\,.
$$
Let $\mathcal{N}_t$ be the set of $z$ where $|S_z| p(z) \le \frac{1}{t(n)}$, for some function $t$ we choose later. Then
$$
\sum_{z \notin \mathcal{N}_t} |S_z| q(z) = 1 - \sum_{z \in \mathcal{N}_t}|S_z| q(z) \ge 1 - \frac{m}{t(n)}\,. 
$$
With $\ge 1 -\frac{m}{t(n)}$ probability, we sample an outcome $x$ with score $f(x) = z \notin \mathcal{N}_t$. Then
$$
|S_z| q(z) \ge t(n)\,,
$$
but by \Cref{cor:flatness}, there is a $\delta > 0$ where $q(z) \le p^{-\delta n}$. This means the size of $|S_z|$ is large; i.e.
\begin{align*}
|S_z| \ge p^{\delta n}/t(n)\,.
\end{align*}
We choose $t(n) = p^{n \cdot 2 \delta/3}$; then since $m$ is polynomial in $n$, the claim holds for $c \defeq \delta / 3$.
\end{proof}
\Cref{claim:unusuallygooddqi} means that DQI samples from an exponentially large set of interesting solutions.
Since this set is so large, the distribution is ``flat'', and therefore hard to simulate by the techniques in this section.
This is similar to the situation in Shor's algorithm,
where one hides an exponentially large set $S$ in a quantum state, so that sampling the state allows the recovery of information about $S$. In Shor's algorithm, $S$ is a congruence class determined by the discrete log value (see \Cref{sec:simulate_shor}). In DQI, $S$ is essentially the intersection of codewords $\{Bx \ | \ x \in \mathbb{F}_2^n\}$ and vectors at some distance from $v$, but without the associated group structure. 


\section{Sampling from DQI distributions}
\label{sec:stockmeyer}

\subsection{Overview}
A central theme in sampling-based quantum advantage is the relationship
between sampling from a distribution and approximating its output
probabilities. A standard ``quantum supremacy'' argument~\cite{aaronson2011computational} proceeds in two steps:
\begin{enumerate}
    \item  Assuming an efficient classical sampler, use Stockmeyer's theorem
to multiplicatively approximate output probabilities in $\BPP^\NP$.
\item Show (or conjecture) that such multiplicative approximation is $\#\P$-hard. Combined with (1), this implies collapse of the polynomial hierarchy, which is unlikely. This is treated as evidence that efficient classical sampling is hard. 
\end{enumerate}

This idea underlies hardness arguments for BosonSampling,
IQP circuits~\cite{bremner2011classical}, and random circuit
sampling~\cite{bouland2019complexity}. We first remark that this proof strategy does not work for DQI because the
DQI output probabilities are computable in $\FP$. This precludes the contradiction used in the standard hardness-of-sampling argument. We strengthen this by showing that the full DQI output
distribution can be approximately sampled in $\BPP^\NP$. Together,
these results rule out the standard Stockmeyer-based route to
establishing classical hardness of DQI.

\subsection{DQI output probabilities in \texorpdfstring{$\FP$}{FP}}
We first show that individual output probabilities of DQI states can
be computed classically in polynomial time. This simple observation implies obstruction to the usual hardness-of-sampling argument to DQI.
\begin{lemma}
    Let $\ket{\psi_P}$ be a DQI state (Eq.~\ref{eqn:dqi_state}) and assume, as in the DQI state preparation procedure from \cite{dqi}, that the weights $\lbrace w_k \rbrace_{k=0}^\ell$ are known.
There exists a polynomial-time classical algorithm that computes $\braket{x}{\psi_P}$.
\label{lemma:g_classical}
\end{lemma}
\begin{proof} 
We first prove the lemma for Max-XORSAT state (Eq.~\ref{eqn:dqi_state}).
Following~\Cref{sub:setup}, the Hadamard transform of $\ket{\psi_P}$ is given by: 
\begin{align}
    H^{\otimes n} \ket{\psi_P}  &= \sum_{k=0}^{\ell} \frac{w_k}{\sqrt{{m \choose k}}} \sum_{y; |y|=k} (-1)^{v\cdot y} \ket{B^T y},
\end{align}
where we assume that $w_k$ are known and easily computable. Let $a(x) = Bx + v$ and  evaluate the contribution of each weight-$k$ slice separately: 
\begin{align}
    \braket{x}{\psi_P} &= \bra{x} H^{\dagger \otimes n} H^{\otimes n} \ket{\psi_P} \\&=  \sum_{k=0}^\ell  \frac{w_k}{\sqrt{{m \choose k}}} \sum_{y; |y| = k} (-1)^{v\cdot y} \braket{x}{H^{\otimes n} | B^T y} \\ &=  \sum_{k=0}^\ell  \frac{w_k}{\sqrt{{m \choose k}}} \frac{1}{\sqrt{2^n}} \sum_{y; |y| = k} (-1)^{(v + Bx) \cdot y} \\ &=  \sum_{k=0}^\ell  \frac{w_k}{\sqrt{{m \choose k}}}\frac{1}{\sqrt{2^n}} \sum_{y; |y| = k} (-1)^{a(x) \cdot y}
\end{align}
Note that: 
\begin{align}
    \sum_{y; |y| = k} (-1)^{a(x) \cdot y} = \sum_{j=0}^k (-1)^j {|a(x)| \choose j}{m - |a(x)| \choose k - j}
\end{align}
This is computable in polynomial time; in fact, it is an evaluation of the Kravchuk polynomial. Since we know $w_k$, we need to compute up to $\ell$ such polynomials.
\medskip 

For Max-LINSAT, using \cite[Section 8.2]{dqi} and its assumptions, the amplitude of the Max-LINSAT state in Eq.~\ref{eqn:dqi_state}, $\bra{x}\ket{\psi_P}$, can be written as:
$$
\sum_{k=0}^\ell \frac{w_k}{\sqrt{p^{n-k} {m \choose k}}} P^{(k)}(g_1( (Bx)_1 ), \dots, g_m( (Bx)_m))\,,
$$
where $g_i$ is efficiently computable, and $P^{(k)}$ is the $k^\text{th}$ elementary symmetric polynomial. This can be evaluated in $\poly(m,p)$ time with Newton's identities.
\end{proof}

The above lemma rules out the standard Aaronson--Arkhipov route to sampling hardness for DQI \cite{aaronson2011computational}: multiplicative approximation of output
probabilities cannot
be $\#\P$-hard here. Any argument that would support hardness of classical simulation of DQI must therefore
exploit an alternative proof strategy.

One may wonder if the \textit{marginal} output probabilities are in $\P$; i.e. if $\sum_{x \in S} p(x)$ can be efficiently computed for any $S \subseteq \{0,1\}^n$. If this was the case, then DQI could be classically simulated, as we could bit-by-bit sample from the output distribution. However, we do not know how to compute the marginal output probabilities in $\P$ or even $\BPP$. Nonetheless, these probabilities can be computed to multiplicative error in $\BPP^{\NP}$ using Stockmeyer's theorem. In the next subsection, we introduce this $\BPP^{\NP}$ algorithm, and explain how it can sample from the output distribution of DQI.

\subsection{Approximate sampling in \texorpdfstring{$\BPP^\NP$}{BPP with NP oracle}}

We use the following formulation of Stockmeyer's approximate
counting theorem.

\begin{theorem}[Stockmeyer~\cite{stockmeyer}]
\label{thm:stockmeyer}
Let $f:\{0,1\}^m \to \{0,1\}$ be polynomial-time computable, and define
\[
p = \Pr_{u \sim \mathcal U(\{0,1\}^m)}[f(u)=1] = \frac{1}{2^m} \sum_{u \in \{0,1\}^m} f(u)\,.
\]
Then for every $\eta,\delta>0$, there exists a randomized algorithm
running in time $\poly(m,1/\eta,\log(1/\delta))$ with access to an
$\NP$ oracle which, with probability at least $1-\delta$, outputs
$\alpha$ such that
\[
(1-\eta)p \le \alpha \le (1+\eta)p.
\]
\end{theorem}

We first prove a general theorem. 

\begin{theorem}
\label{thm:sampling_nonnegative_weights}
Let $g:\{0,1\}^n \to \mathbb R_{\ge 0}$ be polynomial-time computable and
not identically zero. Assume there exist efficiently computable bounds
\[
0<a\le \min_{x:g(x)>0} g(x),
\qquad
\max_{x} g(x)\le A,
\]
such that $A$ and $1/a$ are at most exponential in $n$.
Let
\[
Z := \sum_{x\in\{0,1\}^n} g(x).
\]
Then for every $\epsilon,\delta>0$, there exists a randomized algorithm
running in time $\poly(n,1/\epsilon,\log(1/\delta))$ with access to an
$\NP$ oracle such that, with probability at least $1-\delta$, it samples
from a distribution
\[
q=\{x \text{ w.p. } q(x)\}
\]
satisfying
\[
(1-\epsilon)\frac{g(x)}{Z}
\le q(x)\le
(1+\epsilon)\frac{g(x)}{Z}
\qquad \text{for all } x\in\{0,1\}^n.
\]
\end{theorem}

\begin{proof}
For each prefix $v\in\{0,1\}^{\le n}$, let
\[
S_v:=\{x\in\{0,1\}^n : \text{$x$ has prefix $v$}\},
\qquad
W_v:=\sum_{x\in S_v} g(x).
\]
Then $W_\emptyset=Z$. Moreover, if $W_v>0$, then $W_v\ge a$, since some
$x\in S_v$ has $g(x)>0$ and every positive value of $g$ is at least $a$. We will approximate these prefix weights multiplicatively, and then sample
bit-by-bit using the chain rule.

\paragraph{Step 1: approximating prefix weights.}
Fix a prefix $v$. Let $j$ be an integer to be chosen later, and define a
Boolean function
\[
F_v:\{0,1\}^{n+j}\to\{0,1\}
\]
as follows. On input $(x,t)$, where $x\in\{0,1\}^n$ and $t\in\{0,1\}^j$,
output $0$ if $x\notin S_v$, and otherwise output $1$ if
\[
\frac{t}{2^j} < \frac{g(x)}{A},
\]
and output $0$ otherwise. Since $0\le g(x)/A\le 1$, this predicate is
well-defined and polynomial-time computable.

Let
\[
p_v := \Pr_{(x,t)\sim \mathcal U(\{0,1\}^{n+j})}[F_v(x,t)=1].
\]
For each fixed $x\in S_v$, the acceptance probability over uniformly random
$t$ differs from $g(x)/A$ by at most $2^{-j}$. Therefore
\[
\left| p_v - \frac{W_v}{A\,2^n} \right| \le 2^{-j},
\]
and hence
\[
\left| A\,2^n\,p_v - W_v \right| \le A\,2^{n-j}.
\]
Choose $j$ so that
\[
A\,2^{n-j} \le \eta a,
\]
where $\eta>0$ is a parameter to be fixed later. Since $A/a$ is at most
exponential in $n$, such a choice of $j$ is polynomial in
$n+\log(1/\eta)$.

Applying Theorem~\ref{thm:stockmeyer} to $F_v$, we obtain in
$\poly(n,1/\eta,\log(1/\delta_v))$ time with access to $\NP$ an estimate
$\alpha_v$ such that, with probability at least $1-\delta_v$,
\[
(1-\eta)p_v \le \alpha_v \le (1+\eta)p_v.
\]
Define
\[
\widetilde W_v := A\,2^n\,\alpha_v.
\]
We claim that whenever $W_v>0$, this gives a multiplicative approximation
to $W_v$. Indeed, since $W_v\ge a$ and
$|A\,2^n p_v-W_v|\le \eta a \le \eta W_v$, we have
\[
(1-\eta)W_v - \eta W_v
\le \widetilde W_v
\le (1+\eta)W_v + \eta W_v.
\]
Therefore
\[
(1-2\eta)W_v \le \widetilde W_v \le (1+2\eta)W_v
\]
whenever $W_v>0$.

\paragraph{Step 2: approximate conditional probabilities.}
Suppose we have already sampled a prefix $v$ with $W_v>0$.
The true conditional probability that the next bit equals $1$ is
\[
p_{v\to 1}:=\frac{W_{v1}}{W_v}.
\]
We estimate it by
\[
\widetilde p_{v\to 1}:=\frac{\widetilde W_{v1}}{\widetilde W_{v0} + \widetilde W_{v1}}.
\]
On the event that both $\widetilde W_{v0}$ and $\widetilde W_{v1}$ satisfy the
bounds above, we obtain
\[
\frac{1-2\eta}{1+2\eta}\,\frac{W_{v1}}{W_v}
\le
\widetilde p_{v\to 1}
\le
\frac{1+2\eta}{1-2\eta}\,\frac{W_{v1}}{W_v}.
\]
In particular, for sufficiently small $\eta$,
\[
(1-6\eta)\frac{W_{v1}}{W_v}
\le
\widetilde p_{v\to 1}
\le
(1+6\eta)\frac{W_{v1}}{W_v}.
\]

\paragraph{Step 3: sampling by prefix expansion.}
We now generate a sample $x=x_1x_2\cdots x_n$ iteratively. We start with
the empty prefix $v=\emptyset$. At each step, given the current prefix
$v$, we compute $\widetilde W_v$ and $\widetilde W_{v1}$ and sample the
next bit to be $1$ with probability $\widetilde p_{v\to 1}$ and $0$ with
probability $1-\widetilde p_{v\to 1}$. We then update the prefix and
continue.

If at every step the relevant prefix-weight estimates are correct, then for
every output string $x\in\{0,1\}^n$ we have
\[
q(x)=\prod_{i=1}^n \widetilde p_i(x),
\]
where each $\widetilde p_i(x)$ is the approximate conditional probability
used at step $i$, while the target distribution satisfies
\[
\frac{g(x)}{Z}
=
\prod_{i=1}^n p_i(x),
\]
where each $p_i(x)$ is the corresponding exact conditional probability.
Hence
\[
\left(1-6\eta\right)^n \frac{g(x)}{Z}
\le
q(x)
\le
\left(1+6\eta\right)^n \frac{g(x)}{Z}.
\]
Choosing $\eta= c \cdot\epsilon/n$, for sufficiently small $c > 0$,
gives
\[
(1-\epsilon)\frac{g(x)}{Z}
\le
q(x)
\le
(1+\epsilon)\frac{g(x)}{Z}.
\]

Finally, the algorithm uses at most $2n$ invocations of
Theorem~\ref{thm:stockmeyer} along any sampled path. Setting the failure
probability of each call to $\delta/(2n)$ and taking a union bound shows
that all estimates used during sampling are simultaneously correct with
probability at least $1-\delta$. The total running time remains
$\poly(n,1/\epsilon,\log(1/\delta))$.
\end{proof}

\begin{remark}
The lower bound $a$ is used only to convert the additive discretization
error arising from the threshold test into a multiplicative approximation
for every nonzero prefix weight $W_v$. If the positive values of $g$ could
be doubly exponentially small, this argument would no longer yield a
polynomial-time multiplicative approximation.
\end{remark}

We now apply Theorem~\ref{thm:sampling_nonnegative_weights} to the output
distribution of DQI. A DQI state has the form
\[
\ket{\psi_P} \propto \sum_{x\in\{0,1\}^n} P(f(x))\ket{x},
\]
where $f:\{0,1\}^n\to \mathbb Z$ is polynomial-time computable and
$P$ is a polynomial of degree at most $\poly(n)$. Measuring this state in
the computational basis yields the distribution
\[
\mathcal D_{\mathrm{DQI}}
=
\left\{
x \text{ w.p. } \frac{|P(f(x))|^2}{Z}
\right\},
\qquad
Z=\sum_x |P(f(x))|^2.
\]

Define
\[
g(x):=|P(f(x))|^2.
\]
Then $g$ is polynomial-time computable. Since $f(x)$ is bounded by
$\poly(n)$ and $P$ has degree at most $\poly(n)$ with polynomially many
bits of precision in its coefficients, both $\max_x g(x)$ and the inverse
of the smallest nonzero value of $g(x)$ are at most exponential in $n$.
Therefore Theorem~\ref{thm:sampling_nonnegative_weights} applies.

\begin{corollary}
\label{cor:sample_dqi_bppnp}
For every DQI state $\ket{\psi_P}$ and every $\epsilon,\delta>0$, there
exists a randomized algorithm running in
$\poly(n,1/\epsilon,\log(1/\delta))$ time with access to an $\NP$ oracle
such that, with probability at least $1-\delta$, it samples from a
distribution
\[
q=\{x \text{ w.p. } q(x)\}
\]
satisfying
\[
(1-\epsilon)|\langle x \mid \psi_P\rangle|^2
\le
q(x)
\le
(1+\epsilon)|\langle x \mid \psi_P\rangle|^2
\qquad \text{for all } x\in\{0,1\}^n.
\]
\end{corollary}

The same argument applies to the qudit version of DQI for Max-LINSAT:
one may either work directly over a $d$-ary alphabet or encode each dit in
binary with only polynomial overhead. Corollary~\ref{cor:sample_dqi_bppnp} places approximate classical sampling
from DQI output distributions inside $\BPP^\NP$.

\Cref{thm:sampling_nonnegative_weights} implies a $\BPP^{\NP}$ algorithm to sample the output distribution of \textit{any} quantum state when each amplitude (or its magnitude squared) is efficiently computable. The quantum state does not have to be normalized for this procedure to work.
The output of Shor's discrete log circuit also can be sampled using \Cref{thm:sampling_nonnegative_weights}; we explain this in \Cref{sec:simulate_shor}.


\section{DQI implements a 1990 linear programming bound}
\label{sec:semicirclelaw}

\subsection{Overview}
We show that DQI over $\mathbb{F}_2$ admits a precise interpretation in terms of
classical coding theory. In particular, it provides a constructive
realization of a linear programming bound for codes due to
Tietäväinen~\cite{tietavainen}, which itself builds on the
MacWilliams identity~\cite{macwilliams1963theorem}.

The classical argument is existential: it shows that for any target
vector $v$, there exists a distribution over codewords achieving a
desired average distance from $v$, but does not provide an efficient
method to sample from this distribution. We show that the classical argument can be made constructive as a quantum algorithm equivalent to DQI.

The key observation is that the MacWilliams identity~\cite{macwilliams1963theorem} can be interpreted
as a discrete Fourier transform over the hypercube. The polynomials
used in the linear programming bound correspond to low-degree
Fourier combinations of weight-$k$ slices. DQI prepares precisely
such superpositions by coherently summing over dual codewords. As a consequence, DQI can be viewed as a quantum procedure that
constructively realizes the optimal distributions underlying the
linear programming bound. In particular, the
``semicircle law'' observed in~\cite{dqi}, which is the maximum expected number of satisfying assignments of a solution output by DQI,  coincides with
Tietäväinen's asymptotic bound on the covering radius.

This perspective highlights a fundamental distinction: while the
existence of such distributions is classical, their efficient
preparation appears to require coherent Fourier access to the dual
code, which is naturally provided by a quantum algorithm.

DQI works for optimization problems in which one fixes a matrix
$B \in \mathbb{F}_2^{m \times n}$ and a target vector $v \in \mathbb{F}_2^m$,
and scores a bitstring $x \in \mathbb{F}_2^n$ according to the Hamming
distance between $Bx$ and $v$. This admits a natural coding-theoretic interpretation. Define
the linear code
\[
C := \{Bx : x \in \mathbb{F}_2^n\} \subseteq \mathbb{F}_2^m.
\]
Then $v$ may be viewed as a noisy codeword, and the goal is to find a codeword
$c \in C$ that is far from $v$ (or close to $\overline{v}$). The performance of DQI is governed by properties of the dual code
\[
C^\perp = \{y \in \mathbb{F}_2^m : B^T y = 0\}.
\]
In particular, the algorithm requires the ability to decode $C^\perp$
up to a certain radius, which determines the degree of the polynomial
transform that can be implemented.

An identity comparing codewords of $C$ and of $C^\perp$ was proven by Jessie MacWilliams in 1963. The Hamming weight distribution of codewords in $C$ is related to that of $C^\perp$ by the \textit{Kravchuk transform} (see \Cref{defn:krav}). We state this identity as \Cref{fact:macwilliams}.
This gives a method to upper bound the distance of $C$ to any point $v$, called the \textit{covering radius} of $C$.
The key observation is to interpret the Kravchuk transform as a discretized Fourier transform:
\begin{fact}[\cite{tietavainen}]
\label{fact:function_to_bound_coveringradius}
Fix a binary linear code $C$ with dual distance $d^\perp$.
    Let $\beta(x) = \sum_{k=0}^{d^\perp - 1} \beta_k K_k(x)$ be such that  $\beta_0 > 0$ and $\beta(i) \le 0$ for all $i > j$. Then the \textit{covering radius} of $C$ must be at most $j$.
\end{fact}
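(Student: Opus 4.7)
The plan is to run the character-sum argument underlying the Delsarte linear-programming method for covering codes. Fix an arbitrary target $v \in \mathbb{F}_2^m$; it suffices to produce some $c \in C$ with $|v-c| \le j$. The whole argument hinges on the single quantity
\[
\Sigma(v) \defeq \sum_{c \in C} \beta(|v-c|) = \sum_{k=0}^{d^\perp - 1} \beta_k \sum_{c \in C} K_k(|v-c|).
\]
I would aim to show \emph{(i)} that $\Sigma(v) = \beta_0 |C|$, which is strictly positive by hypothesis, and \emph{(ii)} that if no codeword lay within distance $j$ of $v$, then every summand of $\Sigma(v)$ would be non-positive, contradicting \emph{(i)}.

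For step \emph{(i)} the key tool is the Fourier identity
\[
K_k(|y|) = \sum_{u \in \mathbb{F}_2^m,\; |u|=k} (-1)^{\langle u,\, y\rangle},
\]
which follows directly from the definition of the Kravchuk polynomial by grouping weight-$k$ vectors $u$ according to how many of their $1$s overlap the support of $y$. Substituting this into the inner sum, exchanging the order of summation, and pulling $(-1)^{\langle u, v\rangle}$ out front leaves $\sum_{c \in C}(-1)^{\langle u, c\rangle}$, which by orthogonality equals $|C|$ when $u \in C^\perp$ and $0$ otherwise. Hence $\sum_{c \in C} K_k(|v-c|) = |C| \sum_{u \in C^\perp,\, |u|=k} (-1)^{\langle u, v\rangle}$. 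By the definition of $d^\perp$, the dual code contains no nonzero vector of weight smaller than $d^\perp$, so this inner sum vanishes for $1 \le k < d^\perp$ and reduces to $|C|$ for $k = 0$. Therefore $\Sigma(v) = \beta_0 |C| > 0$ for every $v$, independent of $v$.

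Step \emph{(ii)} is then immediate: if the covering radius exceeded $j$, some $v$ would satisfy $|v-c| > j$ for all $c \in C$, and the hypothesis $\beta(i) \le 0$ for $i > j$ would force every summand of $\Sigma(v)$ to be non-positive, yielding $\Sigma(v) \le 0$ and contradicting \emph{(i)}. I expect the only delicate step to be the Kravchuk--character identity above; once that is in hand, everything else is pure rearrangement plus the defining property of $d^\perp$. Conceptually, $\beta$ is acting as a dual-feasible solution to the linear program whose primal variables are the distance-distribution counts $A_i(v) = |\{c \in C : |v-c| = i\}|$, which is the standard way this type of bound is packaged in coding theory.
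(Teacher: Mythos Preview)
Your proposal is correct and is essentially the same argument the paper gives: the paper's quantity $\sum_{i} W_i(v)\beta(i)$ is exactly your $\Sigma(v)$ regrouped by Hamming distance, and the paper's invocation of the generalized MacWilliams identity (\Cref{fact:general_macwilliams}) is precisely your inline computation via the Kravchuk--character identity and orthogonality of characters on $C$. The only cosmetic difference is that the paper factors out the identity $\sum_{c\in C} K_k(|v-c|) = |C|\cdot \mathbf{1}[k=0]$ for $k<d^\perp$ as a standalone fact, whereas you derive it in place.
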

Notably, \Cref{fact:function_to_bound_coveringradius} is used in \cite{tietavainen} to exactly match the semicircle law of~\cite{dqi} in the limit as $m \to \infty$ as $\frac{\ell}{m}$ held constant.
However, this bound is \emph{existential}, and does not suggest any classical algorithm to find these codewords. 

We first show how to interpret the bound of \cite{tietavainen} as constructing a \emph{good} distribution of codewords $\mathcal{D}_v$; i.e. a codeword randomly sampled from $\mathcal{D}_v$ has a target average distance to $v$.
This naturally implies a \emph{quantum} algorithm, where we construct a quantum state with output distribution $\mathcal{D}_v$. In fact, this quantum algorithm is exactly DQI over $\mathbb{F}_2$.
\begin{theorem}
    Fix a real number $\alpha$ and a binary linear code $C$ with an efficient decoder of $C^\perp$ up to $\ell$ errors. Let $\beta(x) = \sum_{k=1}^{\min(2\ell+1, d^\perp - 1)} \beta_k K_k(x)$ be such that $\frac{\beta(x)}{x-\alpha}$ is a square polynomial.\footnote{A polynomial $p$ is square if $\sqrt{p}$ is also a polynomial.} For each point $v$, there is a quantum algorithm to sample a codeword with average distance $\alpha$ from $v$.
\end{theorem}
This algorithm can locate the subset of codewords at a target distance from $v$ using a good choice of $\beta$. Our best choice of $\beta$ sets the target distance at the semicircle law. Unlike a classical computer, a quantum computer can coherently implement this Fourier transform over the whole codespace. 

As of now, the only way we know how to prove this bound on the covering radius is \emph{through} the Fourier-analytic picture in the MacWilliams identity.
This suggests that efficiently sampling from $\mathcal{D}_v$ requires a coherent Fourier transform, which requires a quantum algorithm.

\subsection{The MacWilliams identity}
The MacWilliams identity relates the Hamming weight distribution of codewords in a code $C$ with that of its dual code $C^\perp$. It relies on the \emph{Kravchuk} polynomials, a set of orthogonal polynomials associated with the binomial distribution (see \cite{feinsilver2007krawtchoukmatricesclassicalquantum} for a survey). These polynomials can be viewed as a discrete version of the Hermite polynomials.
\begin{definition}[\cite{krawtchouk1929generalisation}]
\label{defn:krav}
    For any positive integer $m$, define the Kravchuk (or Krawtchouk) polynomials $\{K_k \ |\ 0 \le k \le m\}$ as
    \begin{align*}
        K_k(j;m) \defeq \sum_{i=0}^k (-1)^i  {j \choose i} {m-j \choose k-i}\,.
    \end{align*}
    When $m$ is clear from context, we write the evaluated polynomial as $K_k(j)$.
\end{definition}
\begin{fact}[{e.g. \cite[Theorem 16]{msbook}}]
\label{fact:krav_orthogonal}
    $\sum_{i=0}^m {m \choose i} K_k(i;m) K_\ell(i;m) = 2^m {m \choose k} \delta_{k\ell}$.
\end{fact}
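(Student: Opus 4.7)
The cleanest proof goes through the character interpretation of the Kravchuk polynomials, which also fits the Fourier-analytic viewpoint emphasized throughout the paper. The key lemma I would establish first is the identity
$$K_k(|t|;m) \;=\; \sum_{S \subseteq [m],\, |S|=k} (-1)^{|S \cap t|} \;=\; \sum_{|S|=k} \chi_S(t)$$
for any $t \in \mathbb{F}_2^m$, where $\chi_S(t) := (-1)^{S \cdot t}$. This is a short counting argument: fix $t$ with $|t|=j$, partition weight-$k$ sets $S$ by $i := |S \cap t|$, observe there are $\binom{j}{i}\binom{m-j}{k-i}$ such $S$, each contributing $(-1)^i$, and compare with \Cref{defn:krav}. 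In particular, $K_k(|t|;m)$ depends on $t$ only through $|t|$, so it is well defined.

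With this identity in hand, the plan is to rewrite the weighted sum on the left of \Cref{fact:krav_orthogonal} as an unweighted sum over $\mathbb{F}_2^m$, since collecting $t \in \mathbb{F}_2^m$ by Hamming weight $i$ produces exactly the factor $\binom{m}{i}$:
$$\sum_{i=0}^m \binom{m}{i} K_k(i;m)K_\ell(i;m) \;=\; \sum_{t \in \mathbb{F}_2^m} K_k(|t|;m) K_\ell(|t|;m).$$
Substituting the character expansion for each Kravchuk polynomial and swapping the order of summation, the right-hand side becomes
$$\sum_{|S|=k}\,\sum_{|T|=\ell}\,\sum_{t \in \mathbb{F}_2^m} \chi_S(t)\chi_T(t) \;=\; \sum_{|S|=k}\,\sum_{|T|=\ell} 2^m \,\delta_{S,T},$$
using the orthogonality of characters of $\mathbb{F}_2^m$: $\sum_t (-1)^{(S \oplus T)\cdot t} = 2^m \mathbf{1}[S = T]$. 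The diagonal contribution is nonzero only when $k = \ell$, in which case it equals $2^m \binom{m}{k}$, matching the claimed right-hand side.

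The only genuinely non-mechanical step is verifying the character formula $K_k(|t|;m) = \sum_{|S|=k} \chi_S(t)$; everything else is bookkeeping. As an alternative (essentially equivalent) route, one can use the generating function $\sum_k K_k(j;m)z^k = (1-z)^j(1+z)^{m-j}$ and compute $\sum_i \binom{m}{i}(1-z)^i(1+z)^{m-i}(1-w)^i(1+w)^{m-i} = (2+2zw)^m$ by the binomial theorem, then read off the coefficient of $z^k w^\ell$; but the character-theoretic proof is more illuminating given the Fourier-analytic framing used in the sequel.
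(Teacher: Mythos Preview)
Your proof is correct. Note, however, that the paper does not supply its own proof of \Cref{fact:krav_orthogonal}; it simply cites it from the MacWilliams--Sloane textbook. So there is nothing to compare against on the paper's side.

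That said, your argument dovetails nicely with the paper's framework: the character identity $K_k(|t|;m) = \sum_{|S|=k} (-1)^{S\cdot t}$ that you single out as the ``only genuinely non-mechanical step'' is exactly what the paper proves as \Cref{fact:count_elementary_sym_polynomials} in \Cref{sec:mathfacts} (via Newton's identities and generating functions, rather than the direct counting you sketch). Given that lemma, your reduction to character orthogonality on $\mathbb{F}_2^m$ is clean and entirely in keeping with the Fourier-analytic viewpoint the paper adopts. The generating-function alternative you mention at the end is also correct and is closer in spirit to the classical textbook proof.
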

We now state the MacWilliams identity:
\begin{fact}[\cite{macwilliams1963theorem}]
\label{fact:macwilliams}
    Given a binary linear code $C: \mathbb{F}_2^n \to \mathbb{F}_2^m$, let 
    \begin{align*}
        W_k \defeq \left| \{c \in C; |c| = k\} \right|\,, & & W_k^\perp \defeq \left| \{c \in C^\perp; |c| = k\} \right|\,.
    \end{align*}
    Then the following relationship holds:
    \begin{align*}
        W_k^\perp = \frac{1}{|C|} \sum_{i=0}^m W_i K_k(i)\,.
    \end{align*}
\end{fact}
Note that all $W_k$ and $W_k^\perp$ must be non-negative.
This identity was famously used to upper bound the size of $C$ by its distance~\cite{mceliece1977new}, known as the first and second \emph{linear programming bound} for codes. We will use (a slight generalization of) this identity to upper bound the \emph{covering radius} of a code $C$ by the distance of its \textit{dual} code $C^\perp$.

\subsection{Recovering the semicircle law}
We will re-prove the following existential result of \cite{tietavainen}:
\begin{claim}[\cite{tietavainen}]
\label{claim:tiet}
    The covering radius of a binary linear code $C: \mathbb{F}_2^n \to \mathbb{F}_2^m$ is at most
\begin{align*}
    \frac{m}{2} - \sqrt{\ell(m-\ell)} + o(m)\,,
\end{align*}
for any $\ell$ less than half the dual distance; i.e. $\ell < d^\perp/2$.
\end{claim}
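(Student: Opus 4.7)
The plan is to instantiate \Cref{fact:function_to_bound_coveringradius} with the explicit choice
$$\beta(x) = (\alpha - x)\,Q(x)^2,$$
where $Q$ is a polynomial of degree at most $\ell$ and $\alpha$ is a parameter to be chosen. Since $\deg \beta = 2\ell + 1$, the degree constraint of \Cref{fact:function_to_bound_coveringradius} is met when $\ell < d^\perp/2$ (a small parity mismatch is harmless---absorbing one unit into the $o(m)$ term costs nothing asymptotically). The sign condition $\beta(i) \le 0$ for integer $i \ge \alpha$ is automatic because $Q(i)^2 \ge 0$. It therefore remains to choose $Q$ so that the Kravchuk coefficient $\beta_0$ is positive, and to show that $\alpha$ can be pushed down to $m/2 - \sqrt{\ell(m-\ell)} + o(m)$.

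By the orthogonality relation (\Cref{fact:krav_orthogonal}),
$$\beta_0 \;=\; \frac{1}{2^m}\sum_{i=0}^m \binom{m}{i}\,(\alpha - i)\,Q(i)^2,$$
so $\beta_0 > 0$ is equivalent to $\alpha > \mathbb{E}_{\mu_Q}[i]$, where $\mu_Q(i) \propto \binom{m}{i}\,Q(i)^2$ is a probability measure on $\{0,\dots,m\}$. The best feasible $\alpha$ is thus
$$\alpha^\star \;=\; \inf_{\deg Q \le \ell}\; \mathbb{E}_{\mu_Q}[i].$$
This is a classical extremal problem: $\alpha^\star$ equals the smallest eigenvalue of the multiplication-by-$x$ operator on degree-$\le \ell$ polynomials under the binomial inner product $\langle f,g\rangle = \sum_i \binom{m}{i} f(i) g(i)$. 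By Gauss quadrature for the binomial weight (equivalently, the Christoffel--Darboux identity for the Kravchuk orthogonal polynomials), this smallest eigenvalue equals the smallest zero $x_{\mathrm{min}}$ of $K_{\ell+1}(x;m)$, achieved by taking $Q(x) = K_{\ell+1}(x;m)/(x - x_{\mathrm{min}})$, which concentrates $\mu_Q$ at that single node.

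Finally, I would invoke the classical asymptotics of Kravchuk polynomial zeros: in the regime $m \to \infty$ with $\ell/m$ bounded away from $1/2$, one has $x_{\mathrm{min}} = m/2 - \sqrt{\ell(m-\ell)} + o(m)$. Geometrically, the locus of points $(\ell, x_{\mathrm{min}})$ traces a semicircle, which accounts for the ``semicircle law'' terminology of~\cite{dqi}. Combining this asymptotic with \Cref{fact:function_to_bound_coveringradius} gives the claimed covering radius bound. The main obstacle is this last step---the sharp asymptotics of the extremal Kravchuk zero---which is standard in orthogonal polynomial theory but technical. A cleaner alternative that avoids the sharp zero analysis is to substitute a specific suboptimal $Q$ (for instance, a Chebyshev-like polynomial rescaled to the binomial support) and verify by direct computation that its first moment already attains the semicircle value up to $o(m)$ error.
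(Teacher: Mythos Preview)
Your proposal is correct and arrives at exactly the same polynomial the paper uses: writing $\beta(x)=(\alpha-x)Q(x)^2$ and taking $Q(x)=K_{\ell+1}(x)/(x-x_{\min})$ with $\alpha$ just above the smallest Kravchuk zero $x_{\min}$ is, up to the index shift $\ell\mapsto\ell+1$, precisely the paper's choice $\beta(x)=-(x-\alpha_1-\epsilon)\prod_{s\ge 2}(x-\alpha_s)^2$ from \Cref{fact:tietavainen_fn}. Both then finish by citing the standard asymptotic for the extreme Kravchuk zero.

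The packaging differs in a way worth noting. The paper simply writes down the explicit $\beta$, checks $\beta_0=0$ at $\epsilon=0$ by orthogonality, and differentiates in $\epsilon$ to force $\beta_0>0$. You instead cast the choice of $Q$ as a variational problem---minimize the first moment $\mathbb{E}_{\mu_Q}[i]$ under the binomial weight---and identify the minimizer via Gauss quadrature / Christoffel--Darboux for the Kravchuk system. Your framing explains \emph{why} this particular $Q$ is optimal among all degree-$\le\ell$ choices, whereas the paper only verifies that it works; on the other hand, the paper's $\epsilon$-perturbation argument is self-contained and avoids invoking the Gauss quadrature machinery. Either route is fine, and your remark that the degree off-by-one is absorbed in the $o(m)$ term is correct.
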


To prove \Cref{claim:tiet}, we need a slight generalization of the MacWilliams identity. This allows us to reason about the distribution of codewords from any point $v \in \mathbb{F}_2^m$.
\begin{fact}
\label{fact:general_macwilliams}
Fix a binary linear code $C: \mathbb{F}_2^n \to \mathbb{F}_2^m$. Let 
$$
W_\ell(v) \defeq \left| \{ c \in C;  |c-v| = \ell\}\right|
$$ be the number of codewords with Hamming distance $\ell$ from a point $v \in \mathbb{F}_2^m$. Then
\begin{align}
\label{eqn:mcw_general}
    \sum_{i=0}^m  W_i(v) K_k(i) = \sum_{y \in \mathbb{F}_2^m; |y|=k}(-1)^{v \cdot y} \cdot |C| \cdot \mathbf{1}_{B^T y = 0}\,.
\end{align}

\end{fact}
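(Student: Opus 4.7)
The plan is to mimic the standard proof of the MacWilliams identity, leveraging the character-sum representation of the Kravchuk polynomials. First, I would rewrite the left-hand side by grouping codewords by their distance from $v$:
\[
\sum_{i=0}^m W_i(v) K_k(i) \;=\; \sum_{c \in C} K_k(|c - v|),
\]
which is immediate from the definition of $W_i(v)$.

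Next, I would invoke the Fourier identity
\[
K_k(|u|) \;=\; \sum_{\substack{y \in \mathbb{F}_2^m \\ |y| = k}} (-1)^{u \cdot y},
\]
valid for any $u \in \mathbb{F}_2^m$. This follows by splitting the coordinates of $y$ according to whether they lie in $\mathrm{supp}(u)$: if exactly $i$ of $y$'s ones sit inside $\mathrm{supp}(u)$, they contribute a sign $(-1)^i$, and summing over $i$ reproduces $\sum_i (-1)^i \binom{|u|}{i} \binom{m - |u|}{k - i} = K_k(|u|)$ from \Cref{defn:krav}. Substituting $u = c - v$, using that $(-1)^{(c - v) \cdot y} = (-1)^{c \cdot y}(-1)^{v \cdot y}$ over $\mathbb{F}_2$, and swapping the order of summation rewrites the left-hand side as
\[
\sum_{\substack{y \in \mathbb{F}_2^m \\ |y| = k}} (-1)^{v \cdot y} \sum_{c \in C} (-1)^{c \cdot y}.
\]

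For the inner sum, standard character orthogonality says $\sum_{c \in C} (-1)^{c \cdot y}$ equals $|C|$ when $y \in C^\perp$ and vanishes otherwise. Since $C$ is the column space of $B$, the condition $y \in C^\perp$ says $(Bx) \cdot y = 0$ for every $x \in \mathbb{F}_2^n$, equivalently $(B^T y) \cdot x = 0$ for all $x$, equivalently $B^T y = 0$. Substituting $|C| \cdot \mathbf{1}_{B^T y = 0}$ for the inner sum recovers the stated right-hand side. I do not anticipate any real obstacle; the only mildly technical ingredient is the Kravchuk character identity above, which is a purely combinatorial rewriting based on how $y$ distributes across $\mathrm{supp}(u)$ and its complement, and the specialization $v = 0$ recovers the classical MacWilliams identity of \Cref{fact:macwilliams} as a sanity check.
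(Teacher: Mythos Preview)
Your proposal is correct and mirrors the paper's proof essentially line for line: both rewrite the left-hand side as $\sum_{c\in C} K_k(|c-v|)$, apply the character-sum identity $K_k(|u|)=\sum_{|y|=k}(-1)^{u\cdot y}$, swap sums, and evaluate $\sum_{c\in C}(-1)^{c\cdot y}$ as $|C|\cdot\mathbf{1}_{B^T y=0}$. The only cosmetic difference is that the paper parameterizes $c=Bx$ before evaluating the inner sum, whereas you invoke character orthogonality on $C$ directly; these are the same step.
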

\begin{proof}
We rely on the following identity, which holds for any $w \in \mathbb{F}_2^m$:
$$
K_k(|w|) = K_k(\frac{m - \sum_{i=1}^m (-1)^{w_i}}{2}) 
= \sum_{y \in \mathbb{F}_2^m; |y|=k} (-1)^{w \cdot y} 
\,.
$$
We prove this identity as \Cref{fact:count_elementary_sym_polynomials} in \Cref{sec:mathfacts}.

Let $B \in \mathbb{F}_2^{m \times n}$ be the generating matrix of $C$. 
Then
$$
\sum_{i=0}^m  W_i(v) K_k(i) = \sum_{c \in C} K_k(|c-v|) 
= \sum_{c \in C} \sum_{y \in \mathbb{F}_2^m; |y|=k} (-1)^{(c-v) \cdot y}
 = \sum_{y \in \mathbb{F}_2^m; |y|=k}  (-1)^{v \cdot y} \sum_{x \in \mathbb{F}_2^n} (-1)^{(B x) \cdot y}\,.
 $$
Observe that the second sum is nonzero only when $Bx \cdot y = 0$ for all $x$; i.e. when $B^T y = 0$.
\end{proof}
When $v = 0$,  the right-hand side of \cref{eqn:mcw_general} counts the number of \emph{dual} codewords in $C^\perp$ of weight $k$; i.e. $|C| \cdot W_k^\perp(0)$. This recovers \Cref{fact:macwilliams}.

We now use \Cref{fact:general_macwilliams} to bound the covering radius:
\begin{claim}[\cite{tietavainen}]
\label{claim:function_to_bound_coveringradius}
Fix a binary linear code $C$ with dual distance $d^\perp$.
    Let $\beta(x) = \sum_{k=0}^{d^\perp - 1} \beta_k K_k(x)$ be such that  $\beta_0 > 0$ and $\beta(i) \le 0$ for all $i > j$. Then the covering radius of $C$ must be at most $j$.
\end{claim}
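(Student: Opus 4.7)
The plan is to fix an arbitrary point $v \in \mathbb{F}_2^m$ and show that some codeword of $C$ lies within Hamming distance $j$ of $v$; since $v$ is arbitrary this immediately bounds the covering radius. The main tool is the generalized MacWilliams identity (\Cref{fact:general_macwilliams}), applied to the weight distribution $W_i(v)$ of $C$ around $v$, combined with the expansion of $\beta$ in the Kravchuk basis.

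First I would evaluate the moments $\sum_{i=0}^m W_i(v) K_k(i)$ for $0 \le k < d^\perp$. By \Cref{fact:general_macwilliams}, this equals $|C| \sum_{y: |y|=k}(-1)^{v \cdot y} \mathbf{1}_{B^T y = 0}$. For $1 \le k < d^\perp$, any nonzero $y$ with $B^T y = 0$ would be a dual codeword of weight less than $d^\perp$, which is impossible by definition of the dual distance; and $y = 0$ has weight $0 \ne k$. Hence the sum is $0$ for $1 \le k < d^\perp$. For $k = 0$, the identity gives $\sum_i W_i(v) = |C|$, since $K_0 \equiv 1$.

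Next I would pair these moments with the coefficients $\beta_k$. Using linearity and $\beta(x) = \sum_{k=0}^{d^\perp - 1} \beta_k K_k(x)$,
\begin{equation*}
\sum_{i=0}^m W_i(v)\, \beta(i) \;=\; \sum_{k=0}^{d^\perp - 1} \beta_k \sum_{i=0}^m W_i(v) K_k(i) \;=\; \beta_0 \cdot |C|,
\end{equation*}
which is strictly positive by the assumption $\beta_0 > 0$.

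Finally, I would derive the contradiction. Suppose for contradiction that no codeword of $C$ lies within distance $j$ of $v$, i.e. $W_i(v) = 0$ for all $i \le j$. Then
\begin{equation*}
\sum_{i=0}^m W_i(v)\, \beta(i) \;=\; \sum_{i > j} W_i(v)\, \beta(i) \;\le\; 0,
\end{equation*}
since each $W_i(v) \ge 0$ and each $\beta(i) \le 0$ for $i > j$. This contradicts the positive value $\beta_0 |C|$ obtained above. Hence some codeword lies within distance $j$ of $v$, and since $v$ was arbitrary the covering radius is at most $j$. The proof is essentially a linear-programming duality argument; the only subtle step is checking that the moments vanish in the precise range $1 \le k < d^\perp$, which is where the dual-distance hypothesis enters, so I would take care to state this cleanly.
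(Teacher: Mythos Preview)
Your proposal is correct and follows essentially the same approach as the paper: fix an arbitrary $v$, use \Cref{fact:general_macwilliams} together with the dual-distance hypothesis to see that $\sum_i W_i(v)K_k(i)$ vanishes for $1\le k<d^\perp$ and equals $|C|$ for $k=0$, deduce $\sum_i W_i(v)\beta(i)=\beta_0|C|>0$, and conclude that some $W_i(v)$ with $i\le j$ must be positive. The only cosmetic difference is that you phrase the last step as a contradiction whereas the paper writes it as the direct inequality $0<\beta_0\le\frac{1}{|C|}\sum_{i\le j}W_i(v)\beta(i)$.
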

\begin{proof}
Suppose $C$ is generated by some matrix $B \in \mathbb{F}_2^{m \times n}$.  The \emph{dual code} $C^\perp$ is then equal to  $\{y \ |\  B^T y = 0\}$. Fix any point $v \in \mathbb{F}_2^m$. Invoking \Cref{fact:general_macwilliams} for all $1 \le k < d^\perp$,  we have
\begin{align*}
    \sum_{i=0}^m W_i(v) K_k(i)= 0\,.
\end{align*}
Since $K_0(x) = 1$ is constant, we have $\sum_{i=0}^m W_i(v) K_0(i) = |C|$.

We can trivially represent $\beta_0 = \beta_0 \cdot 1 + \sum_{k=1}^{d^\perp - 1} \beta_k \cdot 0$. So then
$$
\beta_0 = \sum_{k=0}^{d^\perp - 1} \beta_k \cdot \frac{1}{|C|} \sum_{i=0}^m W_i(v) K_k(i)\,,
$$
since the second sum is $1$ at $k = 0$ and $0$ for all $1\le k < d^\perp$. By definition of $\beta$, we rewrite this as
\begin{align}
\label{eqn:beta0_simplified}
    \beta_0 =  \frac{1}{|C|} \sum_{i=0}^m W_i(v) \beta(i)\,.
\end{align}
Recall that $\beta_0 > 0$ and $\beta(i) < 0$ for all $i > j$ by assumption. Since  $W_i(v) \ge 0$, we have
\begin{align}
\label{eqn:bounding_betazero}
    0 < \beta_0 = \frac{1}{|C|} \sum_{i=0}^m W_i(v) \beta(i) \le \frac{1}{|C|} \sum_{i=0}^{j} W_i(v) \beta(i)\,.
\end{align}
So, for any $v$, there is some $W_i(v) > 0$ for $0 \le i \le j$. So the covering radius of $C$ is at most $j$.
\end{proof}
\Cref{claim:function_to_bound_coveringradius} means that a good choice of $\beta(x)$ with degree $< d^\perp$ implies a good bound on the \emph{covering radius} of $C$. We first describe the choice proposed by \cite{tietavainen}:
\begin{fact}[\cite{tietavainen}]
\label{fact:tietavainen_fn}
For any
$\ell$, let $\alpha_1 \le \dots \le  \alpha_\ell$ be the roots of $K_\ell(x)$. Then for any $\epsilon > 0$,
\begin{align*}
    \beta(x) = -(x-\alpha_1-\epsilon) \prod_{s=2}^\ell (x-\alpha_s)^2
\end{align*}
has $\beta_0 > 0$, and $\beta(i) \le 0$ for all $i > \alpha_1 + \epsilon$.  
\end{fact}
\begin{proof}
When $i > \alpha_1 + \epsilon$, we have $-(x-\alpha_1-\epsilon) < 0$, and so $\beta(x)$ cannot be positive.

Now we prove $\beta_0 > 0$. We rewrite $\beta_0$ using \Cref{fact:krav_orthogonal}:
$$
\sum_{i=0}^m {m \choose i} \beta(i)
=
 \sum_{i=0}^m {m \choose i} K_0(i) \beta(i)
 =
 \sum_{k=0}^{d^\perp - 1} \beta_k \sum_{i=0}^m {m \choose i} K_0(i) K_k(i)
 =
 \beta_0 \cdot 2^m 
\,.
$$
When $\epsilon = 0$, we have
$$
\beta_0 \cdot 2^m = \sum_{i=0}^m {m \choose i} K_\ell(i) \cdot \frac{K_\ell(i)}{x - \alpha_1}\,.
$$
Again by \Cref{fact:krav_orthogonal}, this value is zero, since the second term has degree strictly less than $\ell$.

We take a derivative of $\beta_0$ with respect to $\epsilon$. This value is
\begin{align*}
     \frac{1}{2^m} \sum_{i=0}^m {m \choose i} \prod_{s=2}^\ell (x-\alpha_s)^2 > 0\,,
\end{align*}
and so $\beta_0 > 0$ when $\epsilon$ is positive.
\end{proof}

We may now prove the main result of \cite{tietavainen}:
\begin{proof}[Proof of \Cref{claim:tiet}]
By \Cref{claim:function_to_bound_coveringradius} and \Cref{fact:tietavainen_fn}, the covering radius of a code $C$ is at most the smallest root of $K_\ell(x;m)$, for $2 \ell + 1 < d^\perp$.
The asymptotic value of this root is well-known  (e.g., \cite[Theorem 35]{slot2023sum}). If $m \to \infty$ while $\frac{\ell}{m}$ stays constant, the value $\alpha_1$ approaches
\begin{align*}
\frac{m}{2} - \sqrt{\ell(m-\ell)}\,.\tag*{\qedhere}
\end{align*}   
\end{proof}

\subsection{Interpreting this bound as a quantum algorithm}

\Cref{claim:tiet} is precisely the ``semicircle law'' of DQI~\cite{dqi}.
However, the  bound is existential, whereas DQI actually \emph{locates} a codeword at this distance. Here we show how to interpret the proof of \Cref{claim:tiet} as a quantum algorithm.

Let $\mathcal{D}$ be a distribution over codewords in $C$, where $p: C \to \mathbb{R}_{\ge 0}$ computes the probability of observing a codeword in $\mathcal{D}$. For any vector $v \in \mathbb{F}_2^m$, we write $\E_{\mathcal{D}}[|c-v|]$ as the expected distance from $v$ of a codeword sampled from $\mathcal{D}$. Then the covering radius is at most
$$
\max_v \min_{\mathcal{D}} \E_\mathcal{D}[|c-v|]\,.
$$
For each $v$, we consider $\mathcal{D}_v$ where $p_v(c)$ is a function of $|c-v|$. Intuitively, we will construct $\mathcal{D}_v$ by placing high probability on codewords that are close to $v$.
\begin{theorem}
\label{thm:tiet_firstmoment}
    Fix a real number $\alpha$ and a binary linear code $C$ with dual distance $d^\perp$. Let $\beta(x) = \sum_{k=0}^{d^\perp - 1} \beta_k K_k(x)$ be such that $\beta_0 \le 0$ and $\frac{\beta(x)}{x-\alpha}$ is a square polynomial. Then the covering radius of $C$ is at most $\alpha$.
\end{theorem}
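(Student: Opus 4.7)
The strategy is to mirror the proof of \Cref{claim:function_to_bound_coveringradius}, with the sign conventions flipped to accommodate the factorization $\beta(x) = (x-\alpha)\,q(x)^2$ imposed by the hypothesis. The main engine is the generalized MacWilliams identity (\Cref{fact:general_macwilliams}): for every $1 \le k < d^\perp$, the sum $\sum_i W_i(v) K_k(i)$ is $|C|$ times a count over $y \in \mathbb{F}_2^m$ with $|y| = k$ and $B^T y = 0$. Any such $y$ would be a nonzero dual codeword of weight strictly less than $d^\perp$, which is impossible, so this count vanishes. Only the $k=0$ term in the Kravchuk expansion of $\beta$ survives, giving
\begin{equation*}
    \beta_0 \cdot |C| \;=\; \sum_{i=0}^{m} W_i(v)\,\beta(i)
\end{equation*}
for every $v \in \mathbb{F}_2^m$. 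This is the same key identity that underlies \Cref{claim:function_to_bound_coveringradius}; the new ingredient is the structural assumption on $\beta$.

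Next, I would unpack the sign information in $\beta(x) = (x-\alpha)\,q(x)^2$. Since $q(x)^2 \ge 0$, the sign of $\beta(i)$ agrees with the sign of $i - \alpha$; in particular $\beta(i) \ge 0$ for every integer $i \ge \alpha$ and $\beta(i) \le 0$ for every integer $i \le \alpha$. Arguing by contrapositive, suppose the covering radius of $C$ is strictly greater than $\alpha$, so that some $v^\star$ satisfies $W_i(v^\star) = 0$ for every integer $i \le \alpha$. Then the displayed sum is supported entirely on indices $i > \alpha$, where every summand $W_i(v^\star)\,\beta(i)$ is nonnegative. Because $|C| \ge 1$, at least one codeword contributes, realizing some distance $i_0 > \alpha$ from $v^\star$; generically this gives $\beta(i_0) > 0$ and hence $\beta_0 > 0$, contradicting the hypothesis $\beta_0 \le 0$.

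The main obstacle I foresee is the degenerate case in which every achieved codeword distance $i_0 > \alpha$ happens to coincide with a root of $q$, so that $\beta(i_0) = 0$ and the contradiction only yields $\beta_0 = 0$ rather than a strict sign violation. In the applications that matter here (see \Cref{fact:tietavainen_fn}, where $q$ is a product of Kravchuk roots that all lie below $\alpha$), no such coincidence is possible, so the contradiction is genuine. In a fully general statement one can either strengthen the hypothesis to $\beta_0 < 0$, or assume $q$ has no integer roots in $(\alpha, m]$; either way, once this small technicality is addressed, the rest of the argument is the same one-line consequence of Kravchuk orthogonality (\Cref{fact:krav_orthogonal}) and \Cref{fact:general_macwilliams} used in the proof of \Cref{claim:function_to_bound_coveringradius}.
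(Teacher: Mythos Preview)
Your argument and the paper's are the same at the core: both use the generalized MacWilliams identity to obtain $\beta_0\,|C| = \sum_i W_i(v)\,\beta(i)$ and then exploit the factorization $\beta(x) = (x-\alpha)\gamma(x)^2$ to control signs. The difference is packaging. The paper deliberately reformulates this as a \emph{first-moment} argument: it defines a probability distribution $\mathcal{D}_v$ on codewords with $p_v(c) \propto \gamma^2(|c-v|)$, computes $\E_{\mathcal{D}_v}[|c-v|] = \alpha + |C|\beta_0/Z \le \alpha$, and concludes that some codeword lies within distance $\alpha$. That probabilistic framing is the entire point of the theorem in context: the distribution $\mathcal{D}_v$ is precisely what the quantum algorithm in \Cref{claim:howdqiworks} samples from (via the state $\propto \sum_x \gamma(|Bx-v|)\ket{x}$). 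Your contrapositive sign argument is logically equivalent but drops this payoff.

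The degenerate case you flag is real, and the paper's proof has the identical gap: if every codeword distance from $v$ is a root of $\gamma$, then $Z = 0$, $\mathcal{D}_v$ is undefined, and the first-moment step fails. So you are not missing anything the paper supplies. However, your proposed resolution is backwards: in the actual applications (\Cref{claim:tiet_by_distribution}, \Cref{cor:dqi_on_any_kravchuk_root}) one takes $\alpha = \alpha_1$ to be the \emph{smallest} Kravchuk root and $\gamma(x) = \prod_{s\ge 2}(x-\alpha_s)$, so the roots of $\gamma$ lie \emph{above} $\alpha$, not below it, and the coincidence you rule out is in principle exactly the one that could occur. The honest fix is the one you also mention --- assume $\gamma$ has no integer roots in $(\alpha,m]$, which holds generically since Kravchuk roots are typically irrational --- rather than the claim that the roots sit on the harmless side of $\alpha$.
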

\begin{proof}
Let $\gamma^2(x) \defeq \frac{\beta(x)}{x-\alpha}$; then $\gamma$ is a polynomial of degree at most $(d^\perp-1)/2 $.
Let $\mathcal{D}_v$ be the distribution where $p_v(c) = \frac{1}{Z} \cdot \gamma^2(|c-v|)$ and $Z = \sum_{c \in C} \gamma^2(|c-v|)$. By the first moment method, the covering radius is at most
\begin{align*}
    \max_v \E_{\mathcal{D}_v}[|c-v|] 
    &= \sum_{c \in C} p_v(c) |c-v| 
    \\
    &= \alpha + \sum_{c \in C} p_v(c) (|c-v|-\alpha)
    \\
    &= \alpha + \frac{1}{Z} \sum_{c \in C} \beta(|c-v|)\,.
    \end{align*}
The second term is $\frac{1}{Z} \cdot |C| \cdot \beta_0$ by \cref{eqn:beta0_simplified}, and so negative by assumption. Then the covering radius is at most $\alpha$.
\end{proof}
We can bound the covering radius with essentially the same polynomial as in \Cref{thm:tiet_firstmoment}. This again implies \Cref{claim:tiet}:
\begin{claim}
\label{claim:tiet_by_distribution}
    Fix a binary linear code $C$ with dual distance $d^\perp$. Then the covering radius of $C$ is at most the smallest root of $K_{\ell}$ for any $2 \ell + 1 < d^\perp$.
\end{claim}
\begin{proof}
Fix any valid choice of $\ell$. Let $\alpha_1 \le  \dots \le \alpha_\ell$ be the roots of $K_\ell(x)$. Consider the polynomial
\begin{align*}
    \beta(x) = (x-\alpha_1) \prod_{s=2}^\ell (x-\alpha_s)^2\,.
\end{align*}
Following the proof of \Cref{fact:tietavainen_fn}, we have $\beta_0 = 0$. Moreover, $\frac{\beta(x)}{x-\alpha_1}$ is a square polynomial by construction. The claim follows by \Cref{thm:tiet_firstmoment}.
\end{proof}
How could we obtain a nearby codeword to the vector $v$?
\Cref{thm:tiet_firstmoment} suggests a natural approach: Sample a codeword from $\mathcal{D}_v$. It is not clear how to classically sample from the distribution $\mathcal{D}_v$. However, we can do so quantumly, if we can prepare the state
\begin{align}
\label{eqn:tiet_dist}
    \sum_{x \in\mathbb{F}_2^n} \sqrt{p_v(Bx)}\ket{x} = \frac{1}{\sqrt{Z}}  \sum_{x \in\mathbb{F}_2^n} \gamma(|Bx-v|)\ket{x}
    \,.
\end{align}
Distributions allowed by \Cref{thm:tiet_firstmoment} are exactly the ones constructible with DQI.
\begin{theorem}
\label{claim:howdqiworks}
    Fix a real number $\alpha$ and a binary linear code $C$ with an efficient decoder of $C^\perp$ up to $\ell$ errors. Let $\beta(x) = \sum_{k=1}^{\min(2\ell+1, d^\perp - 1)} \beta_k K_k(x)$ be such that $\frac{\beta(x)}{x-\alpha}$ is a square polynomial. For each point $v$, there is a quantum algorithm that produces a codeword with average distance $\alpha$ from $v$.
\end{theorem}
\begin{proof}
Let $\gamma^2(x) = \frac{\beta(x)}{x-\alpha}$, and $\gamma(x) = \sum_{k=0}^{\ell} \gamma_k K_k(x)$. By construction, $\beta_0 = 0$. So the claim holds by following the proof of \Cref{thm:tiet_firstmoment}, if we can prepare
\begin{align*}
    \frac{1}{\sqrt{Z}}  \sum_{x \in\mathbb{F}_2^n} \gamma(|Bx-v|)\ket{x}
    =
     \frac{1}{\sqrt{Z}}  \sum_{k=0}^{\ell} \gamma_k \sum_{x \in\mathbb{F}_2^n} K_k(|Bx-v|)\ket{x}\,.
\end{align*}
We rely on the identity $K_k(|w|) =  \sum_{y \in \mathbb{F}_2^m; |y|=k} (-1)^{w \cdot y}$ (proven as \Cref{fact:count_elementary_sym_polynomials} in \Cref{sec:mathfacts}). The Fourier transform of the quantum state is proportional to
\begin{align}
\label{eqn:dqistate_with_ft}
   \sum_{k=0}^{\ell} \gamma_k \cdot \sum_{y \in \mathbb{F}_2^m; |y|=k} (-1)^{v \cdot y} \ket{B^T y}\,.
\end{align}
States of the form (\ref{eqn:dqistate_with_ft}) are exactly preparable with DQI; i.e. as \cite[Equation 3]{dqi}.  
\end{proof}
We may interpret \Cref{claim:howdqiworks} as both a constructive variant of \Cref{thm:tiet_firstmoment} and a performance analysis of the DQI algorithm. 
The quantum algorithm takes advantage of information about the dual code $C^\perp$ to construct states $\sum_{|y|=k} (-1)^{v \cdot y} \ket{B^T y}$ in the Fourier basis, and then prepares $\mathcal{D}_v$ by coherently assigning probabilities to codewords by distance to $v$.
Meanwhile, there is no obvious way to \textit{classically} construct the distributions $\mathcal{D}_v$ in \Cref{claim:howdqiworks}.

Using the choice of $\beta$ in \Cref{claim:tiet_by_distribution}, a quantum algorithm can obtain codewords at semicircle law distance, or in fact at any distance in the region $[\frac{m}{2} - \sqrt{\ell(m-\ell)} + o(m), \frac{m}{2} + \sqrt{\ell(m-\ell)} + o(m)]$:
\begin{corollary}
\label{cor:dqi_on_any_kravchuk_root}
    Consider a binary linear code with an efficient decoder of $C^\perp$ up to $\ell$ errors. Let $\alpha$ be any root of a Kravchuk polynomial $K_k$ for $k \le \ell$. Then a quantum algorithm can prepare a distribution over codewords of average distance $\alpha$.
\end{corollary}
\begin{proof}
Let $\alpha_1 \le \dots \le \alpha_k$ be the roots of $K_k(x)$. Consider the polynomial
\begin{align*}
    \beta(x) = (x-\alpha_i) \prod_{s=1; s \ne i}^k (x-\alpha_s)^2\,.
\end{align*}
Following the proof of \Cref{fact:tietavainen_fn}, we have $\beta_0 = 0$. By construction, $\beta$ is a polynomial of degree $2k-1 \le 2\ell-1 \le d^\perp$, and $\frac{\beta(x)}{x-\alpha_i}$ is a square polynomial. The claim follows by \Cref{claim:howdqiworks}.
\end{proof}
The \textit{existence} of codewords everywhere in this region can also be proven using a minor tweak of \cite{tietavainen}'s original work; see \Cref{sec:codewords_concentric_hamming_balls}. A quantum algorithm can locate these codewords.
For example, using \Cref{cor:dqi_on_any_kravchuk_root} with the smallest root of $K_k$, we may quantumly find codewords of average asymptotic distance 
$$
\frac{m}{2} - \sqrt{k(m-k)} + o(m)\,,
$$
for any $0 \le k \le \ell$. It is unclear if a classical algorithm can find these codewords.


\section{DQI and the quantum harmonic oscillator}
 \label{sec:harmonic-oscillator}
\subsection{Overview}

We show how DQI on $n$ variables and $m$ constraints can be seen as acting on a quantum harmonic oscillator with $\poly(m,n)$ qubits. However, even if the quantum harmonic oscillator can be analyzed classically, actually \textit{preparing} a state requires a quantum computer. This allows a quantum computer to access information stored in the states of the oscillator.
By analogy, the oracle quantum algorithm traversing an exponentially large welded tree~\cite{childs2003exponential} can be seen as acting on a polynomially-sized line, yet there is no efficient classical algorithm to traverse the tree. 

In the usual formulation of a quantum harmonic oscillator, a quantum particle lying on the real line is subject to a quadratic potential well. This system can be exactly solved: The eigenstates have energy $(k + \frac{1}{2})$ for $k \in \mathbb{Z}_{\ge 0}$, and in the position basis, are Hermite polynomials with a Gaussian decay. 
However, digital quantum computers have a finite number of basis states, and cannot act on the real line. 
We choose a procedure to discretize the oscillator that preserves the eigenspectrum~\cite{atakishiev1990difference}.  We then implement this oscillator on a quantum computer so that the eigenstates are Fourier transforms of $n$-qubit Dicke states.

\begin{claim}
    Fix any positive integer $s$. There is a Hamiltonian $\mathbf{H}_{s}$ that specifies
    a \textit{discrete} quantum harmonic oscillator $\mathbf{H}_{s}$ over the grid $\{-s, \dots, 0, \dots, s\}$ in $m \defeq 2s$ qubits, with position operator 
    \begin{align*}
            \hat{x} \defeq -\frac{1}{2} \sum_{i=1}^{m} Z_i\,,
    \end{align*}
    and where the eigenstates of $\mathbf{H}_{s}$ are Fourier transforms of Dicke states; i.e. for $0 \le k \le m$,
    \begin{align*}
        \ket{\lambda_k} = H_2^{\otimes m} \ket{D_k}\,.
    \end{align*}
\end{claim}
In this setup, one can quantumly prepare any state of the discrete harmonic oscillator first by preparing a linear combination of Dicke states, and then applying the Fourier transform. 

We also consider discrete oscillators where the eigenstates are \textit{obfuscated} by some function $h$; i.e. replacing the Dicke state $\sum_{|x|=k} \ket{x}$ with $\sum_{|x|=k} \ket{h(x)}$. This is possible over subspaces of the oscillator where the map $h$ is invertible. In the setting of a linear code, i.e. given $C^\perp = \{y\ |\  B^T y = 0\}$, let $h$ be the map from a noisy codeword $y$ to a syndrome $B^T y$. Then this map is invertible when $|y| < d^\perp / 2$.  These obfuscated states form a low-energy subspace of a quantum harmonic oscillator:
\begin{claim}[informal]
    Consider a linear code with parity check matrix $B^T: \mathbb{F}_2^m \to \mathbb{F}_2^n$ and distance $d$. Then it is possible to embed the first $\lfloor (d^\perp-1)/2\rfloor$ eigenstates of a discrete quantum harmonic oscillator in a quantum computer, where each eigenstate has form $\ket{\lambda_k} := H_2^{\otimes m} \sum_{y \in \mathbb{F}_2^m; |y|=k} \ket{B^T y}$, and with approximate position operator
    $$
    \hat{x}  = -\frac{1}{2} \sum_{x \in \mathbb{F}_2^n} \sum_{i=1}^m (-1)^{(Bx)_i} \ketbra{x}\,.
    $$
\end{claim}
We visualize the accessible states of the obfuscated oscillator in \Cref{fig:qho}.

\begin{figure}[t]
  \centering
  \resizebox{\textwidth}{!}{%
    \input{qho_v40_selfcontained.tex}%
  }
  \caption{\small Cartoon of an obfuscated quantum harmonic oscillator. On the left, we display eigenstates of the oscillator $\ket{\lambda_k}$ in the position basis. On the right, we display an example state which is a linear combination of $\ket{\lambda_k}$ for $k \le 3$.  Only eigenstates with large energy put significant amplitude far from the origin. This limits the maximum (and minimum) position available to states composed of low-energy eigenstates. 
  }
  \label{fig:qho}
\end{figure}

From here, it is easy to interpret DQI as preparing a state in this low-energy subspace of a quantum harmonic oscillator. When $\vec{v} = 1$, the position operator $\hat{x}$ is $m/2$ less than the satisfying fraction of an assignment, so DQI will prepare an oscillator state with as \textit{large} a position as possible.
States in this low-energy subspace take position close to zero; the largest expected position is exactly the semicircle law of~\cite{dqi}.

We explain how the DQI algorithm can be seen as acting on the low-energy subspace of a quantum harmonic oscillator. We show how to discretize the harmonic oscillator in a way that preserves the spectrum, and implement it on a quantum computer. 
We then show how  to encode information in the states of this system that is accessible by sampling the state but not by any obvious classical algorithm. We discuss how this perspective suggests physically motivated generalizations of DQI.

\subsection{A discretized quantum harmonic oscillator}
A quantum harmonic oscillator describes a quantum particle in a quadratic potential well. It is one of the few physical models that is exactly solvable. Consider the differential equation
$$
-i \frac{\partial}{\partial t} \Psi(x;t) = \mathbf{H} \cdot \Psi(x;t)\,,
$$
where 
$$
\mathbf{H} = \frac{1}{2} \left( - \frac{\partial^2}{\partial x^2} + x^2 \right)\,.
$$
The equation has stationary solutions $\Psi_k(x;t) = e^{i (k+\frac{1}{2})t} \cdot \Psi_k(x)$ for each $k \in \mathbb{Z}_{\ge 0}$, where
$$
\Psi_k(x) = \frac{1}{\sqrt{\sqrt{\pi} 2^k k!}} H_k(x) e^{-x^2/2}\,,
$$
and $H_k(x)$ is the $k^{\text{th}}$ \emph{Hermite polynomial}.\footnote{The Hermite polynomials are defined by the relation $H_{-1} = 0$, $H_0 = 1$, and $H_{k+1}(x) = 2x H_k(x) - 2k H_{k-1}(x)$.}

We discretize this system in a way that preserves the spectrum~\cite{atakishiev1990difference,hakioglu2000canonical,chauleur2024discrete}.
The Hermite polynomials are orthogonal with respect to the Gaussian distribution, and complete for square integrable functions on the real line. If we restrict to a grid of $m$ points, then a natural substitute are the \emph{Kravchuk} polynomials (\Cref{defn:krav}) $\{K_0, \dots, K_m\}$. These polynomials are orthogonal with respect to the \textit{binomial} distribution, and approximate the Hermite polynomials in the continuum limit $m \to \infty$ (e.g. \cite{hakioglu2000canonical,chauleur2024discrete}). From here, we may construct a Hamiltonian that is diagonal in the Kravchuk basis:
\begin{fact}[\cite{atakishiev1990difference,chauleur2024discrete}]
\label{fact:discrete_qho}
Fix any positive integer $s$. 
Let $\mathbf{H}_s$ act on the grid $\{-s, \dots, s\}$ defined by the following equation for each integer $0 \le k \le 2s$:
    $$
    \mathbf{H}_s\cdot  K_{k}(x+s;2s)  \sqrt{ {2s \choose x+s}} = (k + \frac{1}{2})\cdot  K_{k}(x+s;2s) \sqrt{ {2s \choose x+s}}\,.
    $$
    Then $\mathbf{H}_s$ approximates a quantum harmonic oscillator in the continuum limit $s \to \infty$.
\end{fact}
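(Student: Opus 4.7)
The plan is to verify that $\mathbf{H}_s$, defined by the stated spectral equation, is a well-defined Hermitian operator on the $(2s{+}1)$-dimensional grid $\{-s,\dots,s\}$, and then to check that in the continuum limit $s\to\infty$ its eigendata converges to that of the standard quantum harmonic oscillator.

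First I would check that the vectors $\phi_k(x) := K_k(x+s;2s)\sqrt{\binom{2s}{x+s}}$ form an orthogonal basis of $\mathbb{C}^{2s+1}$. Applying \Cref{fact:krav_orthogonal} with the substitution $j = x+s$ gives
$$\sum_{x=-s}^{s} \phi_k(x)\,\phi_\ell(x) = \sum_{j=0}^{2s}\binom{2s}{j}K_k(j;2s)K_\ell(j;2s) = 2^{2s}\binom{2s}{k}\delta_{k\ell},$$
so the $\phi_k$ are pairwise orthogonal with respect to the uniform measure on the grid. Since there are exactly $2s+1$ of them, they span; normalizing yields an orthonormal basis $\{\ket{\tilde\phi_k}\}$, and the assignment $\mathbf{H}_s := \sum_{k=0}^{2s}(k+\tfrac{1}{2})\,\ket{\tilde\phi_k}\!\bra{\tilde\phi_k}$ defines a Hermitian operator with exactly the spectrum and eigenvectors claimed in the statement.

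Next I would establish the continuum limit. The key analytic input is the classical Kravchuk-to-Hermite asymptotic: after rescaling the grid by $x \mapsto \xi\sqrt{s/2}$, one has pointwise convergence in $\xi$ (with a uniform normalization $c_s$)
$$c_s^{-1}\,K_k(x+s;2s)\sqrt{\binom{2s}{x+s}} \;\longrightarrow\; \frac{1}{\sqrt{\sqrt{\pi}\,2^k k!}}\, H_k(\xi)\, e^{-\xi^2/2} \;=\; \Psi_k(\xi),$$
which is well-documented in \cite{chauleur2024discrete} and underlies the discretization scheme of \cite{atakishiev1990difference}. Consequently the rescaled eigenfunctions of $\mathbf{H}_s$ converge pointwise to the Hermite wavefunctions $\Psi_k(\xi)$, while the eigenvalues $k+\tfrac{1}{2}$ match the continuum spectrum on the nose.

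The main obstacle is formalizing what ``approximates'' should mean. The cleanest route is to exhibit an explicit second-order finite-difference operator on the grid that (i) has the weighted Kravchuk polynomials $\phi_k$ as eigenvectors with eigenvalues $k+\tfrac{1}{2}$ — this falls out of the three-term recurrence satisfied by $K_k$, computed in \cite{atakishiev1990difference} — and (ii) reduces to $\tfrac{1}{2}(-\partial_\xi^2+\xi^2)$ up to $O(1/s)$ corrections under the rescaling $x\mapsto\xi\sqrt{s/2}$. Step (i) is a direct computation from the Kravchuk recurrence; step (ii) is a Taylor expansion in $1/s$. Together they give norm-resolvent convergence of $\mathbf{H}_s$ to $\mathbf{H}$ on any finite spectral window, which is the standard sense in which a discretized oscillator approximates the continuum one.
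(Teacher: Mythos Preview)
The paper does not supply its own proof of this statement: it is stated as a \emph{Fact} with citations to \cite{atakishiev1990difference,chauleur2024discrete} and is used as a black box. So there is nothing in the paper to compare your proposal against.

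On its own merits, your sketch is reasonable and follows the standard line from the cited references. The orthogonality computation is correct (it is exactly \Cref{fact:krav_orthogonal}), and it does show that $\mathbf{H}_s$ is a well-defined Hermitian operator with the claimed spectrum. The continuum-limit discussion correctly identifies the two ingredients one needs: the Kravchuk-to-Hermite asymptotic for the eigenfunctions, and a difference-operator realization of $\mathbf{H}_s$ whose Taylor expansion recovers $\tfrac{1}{2}(-\partial_\xi^2+\xi^2)$. One caution: the statement in the paper is deliberately informal (``approximates \dots\ in the continuum limit''), and the references you cite do not all phrase this as norm-resolvent convergence; if you want to go beyond a sketch you should pin down exactly which mode of convergence you are proving and check that the cited Kravchuk asymptotics are uniform enough to deliver it. But as a high-level plan matching what the cited literature does, your proposal is on target.
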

Using \Cref{fact:discrete_qho}, we may embed this oscillator on a quantum computer of $2s$ qubits. 
The other benefit of this implementation is that the eigenstates of the oscillator are easy to describe.
\begin{claim}
\label{claim:qho_on_computer}
    Fix any positive integer $s$. Then we can implement a \textit{discrete} quantum harmonic oscillator $\mathbf{H}_{s}$ over the grid $\{-s, \dots, 0, \dots, s\}$ in $m \defeq 2s$ qubits, with position operator 
    \begin{align}
    \label{eqn:positionoperator}
            \hat{x} \defeq -\frac{1}{2} \sum_{i=1}^{m} Z_i\,,
    \end{align}
    and where the eigenstates of $\mathbf{H}_{s}$ are Fourier transforms of Dicke states; i.e. for $0 \le k \le m$,
    \begin{align*}
        \ket{\lambda_k} = H_2^{\otimes m} \ket{D_k}\,.
    \end{align*}
\end{claim}
\begin{proof}
We start from the discrete oscillator in \Cref{fact:discrete_qho}. We represent position $-s \le x \le s$ with the $(x+s)^{\text{th}}$ \textit{Dicke} state
$$
\ket{D_{x+s}} \defeq \frac{1}{\sqrt{ {2s \choose x+s}}} \sum_{y \in \mathbb{F}_2^{2s}; |y|=x+s}\ket{y}\,.
$$
The reader can verify that Dicke states are eigenstates of the position operator in \cref{eqn:positionoperator} with the correct eigenvalue; i.e. $\ket{D_{x+s}}$ has eigenvalue $x$.

We now study the eigenstates of $\mathbf{H}_s$. The $k^{\text{th}}$ eigenstate $\ket{\lambda_k}$ of the oscillator has the form 
$$
\ket{\lambda_k} \propto \sum_{x=-s}^s K_{k}(x+s) \sqrt{ {2s \choose x+s}  } \ket{D_{x+s}} 
= \sum_{x=-s}^s  \sum_{y \in \mathbb{F}_2^{2s}; |y|=x+s} K_k(x+s) \ket{y}
=  \sum_{y \in \mathbb{F}_2^{2s}} K_k(|y|) \ket{y}\,.
$$
We invoke the following identity, proven as  \Cref{fact:count_elementary_sym_polynomials} in \Cref{sec:mathfacts}:
$$
K_k(|w|) = K_k(\frac{m - \sum_{i=1}^m (-1)^{w_i}}{2}) 
= \sum_{z \in \mathbb{F}_2^m; |z|=k} (-1)^{w \cdot z} 
\,.
$$
Then the eigenstates of $\mathbf{H}_s$ are
\begin{align*}
    H_2^{\otimes m} \ket{D_k} = \frac{1}{\sqrt{{m \choose k}}}  \sum_{z \in \mathbb{F}_2^m; |z|=k} H_2^{\otimes m} \ket{z}
    =
    \frac{1}{\sqrt{2^m {m \choose k}}}  \sum_{z \in \mathbb{F}_2^m; |z|=k} 
    \sum_{y \in \mathbb{F}_2^m} (-1)^{y \cdot z} \ket{y} = \ket{\lambda_k}\,.\tag*{\qedhere}
\end{align*}
\end{proof}

Since the oscillator is polynomially-sized in \Cref{claim:qho_on_computer}, it is easy to implement this in both classical and quantum computers. However, this no longer trivially holds when the oscillator states encode information, as we explore in the next subsection.

\subsection{Obfuscating the harmonic oscillator}
There is no a priori reason that an eigenstate of our harmonic oscillator must be the Fourier transform of a Dicke state. Consider a reversible map $\pi: \mathbb{F}_2^m \to \mathbb{F}_2^m$. We could imagine implementing our oscillator so that the $k^{\text{th}}$ eigenstate is 
\begin{align}
    \label{eqn:oscillator}
    H_2^{\otimes m} \frac{1}{\sqrt{{ m \choose k}}} \sum_{z \in \mathbb{F}_2^m; |z|=k} \ket{\pi(z)} \propto \sum_{z \in \mathbb{F}_2^m; |z|=k} \sum_{y \in \mathbb{F}_2^m} (-1)^{y \cdot \pi(z)} \ket{y}\,.
\end{align}
This modifies the position operator. Suppose there exists some $\sigma: \mathbb{F}_2^m \to \mathbb{F}_2^m$ such that for all $y,z \in \mathbb{F}_2^m$, we have $\sigma(y) \cdot z = y \cdot \pi(z)$. Then the eigenstate is proportional to
$$
\sum_{y \in \mathbb{F}_2^m} K_k(|\sigma(y)|)\ket{y}\,,
$$
which transforms each position state $\ket{D_{x+s}}$ to
$$
\frac{1}{\sqrt{{ m \choose x + s}}} \sum_{y \in \mathbb{F}_2^m; |y| = x+ s} \ket{\sigma^{-1}(y)}\,.
$$
In other words, position $-m/2 \le x \le m/2$ is represented by the uniform superposition over strings $y \in \mathbb{F}_2^m$ with $|\sigma(y)| = x+m/2$.

We may also consider a hash function $\pi: \mathbb{F}_2^m \to \mathbb{F}_2^n$ for some $n < m$. We can again build a discrete harmonic oscillator state whenever it is supported on inputs to $\pi$ that have no collisions. A natural way to do this is with a linear code:
\begin{claim}
\label{claim:obfuscated_qho}
    Fix a positive even $m$ and smaller positive integer $n$. Consider a linear code with parity check matrix $B^T: \mathbb{F}_2^m \to \mathbb{F}_2^n$ and distance $d^\perp$. Then it is possible to embed the first $\lfloor (d^\perp-1)/2\rfloor$ eigenstates of a discrete quantum harmonic oscillator with the form \cref{eqn:oscillator}. 
\end{claim}
\begin{proof}
We again start from $\mathbf{H}_{m/2}$ in \Cref{fact:discrete_qho}. Set $\pi = B^T$. We use the form of the oscillator eigenstate from \cref{eqn:oscillator}:
$$
\ket{\lambda_k} \defeq   H_2^{\otimes n} \frac{1}{\sqrt{{ m \choose k}}} \sum_{z \in \mathbb{F}_2^m; |z|=k} \ket{\pi(z)} = \frac{1}{\sqrt{2^n { m \choose k}}} \sum_{z \in \mathbb{F}_2^m; |z|=k} \sum_{y \in \mathbb{F}_2^n} (-1)^{y \cdot B^T z} \ket{y}
  \,.
$$
Note that $\sigma = B$ has the property that $By \cdot z = y \cdot B^T z$. So this state can be written as
$$
 \frac{1}{\sqrt{2^n { m \choose k}}} \sum_{y \in \mathbb{F}_2^n} K_k(|By|) \ket{y}
  \,.
$$
These states are orthogonal whenever the states $\sum_{z \in \mathbb{F}_2^m; |z| = k} \ket{\pi(z)}$ are orthogonal. This holds for all $0 \le k < d^\perp / 2$, since in this region, $\pi(z) = B^T z$ can be uniquely decoded to $z$.
We then embed $\mathbf{H}_{m/2}$ so that the low-energy eigenstates are $\{\ket{\lambda_0}, \dots, \ket{\lambda_{\lfloor (d^\perp-1)/2\rfloor}}\}$; the higher-energy eigenstates can be chosen arbitrarily. 
\end{proof}
In \Cref{claim:obfuscated_qho}, we again embed low-energy states of a discrete harmonic oscillator in a quantum computer. 
This time, the states themselves encode information: the ``position'' basis approximately encodes the satisfying fraction of the problem $Bx = \vec{1}$.  As a result, \textit{sampling} from states in this harmonic oscillator allows one to find decent approximate solutions to this problem. We explore this idea throughout the rest of the section.

First suppose $\pi$ was invertible. Then \Cref{claim:obfuscated_qho} represents position $-m/2 \le x \le m/2$ with the uniform superposition over bitstrings $w \in \mathbb{F}_2^m$ where $|Bw| = x + m/2$:
$$
\ket{D'_{x+m/2}} \propto \sum_{y \in \mathbb{F}_2^m; |By| = x + m/2} \ket{y}\,.
$$
The reader can verify that $\ket{D'_{x+m/2}}$ has eigenvalue $x$ with the following operator:
$$
\hat{x} \defeq -\frac{1}{2} \sum_{x \in \mathbb{F}_2^n} \sum_{i=1}^m (-1)^{(B x)_i} \ketbra{x}\,.
$$
In our case, $\pi$ is only invertible when the input has Hamming weight below $d/2$. However, the effect of $\hat{x}$ on our eigenstates is still easy to describe. In fact, it matches the effect of a $\hat{J}_x$ operator on angular momentum eigenstates:
\begin{lemma}
\label{lemma:position_operator_is_bosonic}
    Let $\{\ket{\lambda_i}\}$ be the harmonic oscillator states from \Cref{claim:obfuscated_qho} with parity check matrix $B^T: \mathbb{F}_2^m \to \mathbb{F}_2^n$ and distance $d^\perp$. Let $c_\ell \defeq \sqrt{\ell(m-\ell+1)}$. Then the operator
       $$
    \hat{x} \defeq -\frac{1}{2} \sum_{x \in \mathbb{F}_2^n} \sum_{i=1}^m (-1)^{(B x)_i} \ketbra{x}\,,
    $$ 
    satisfies  $-2\hat{x} \ket{\lambda_k} = c_k \ket{\lambda_{k-1}} + c_{k+1} \ket{\lambda_{k+1}}$ for all $k < d^\perp/2 - 1$.
\end{lemma}
 We prove \Cref{lemma:position_operator_is_bosonic} in \Cref{sec:mathfacts}. 

If we can prepare the oscillator eigenstates $\{\ket{\lambda_i}\}$, we can make a linear combination $\ket{\psi}$ with large expected position $\hat{x}$. Since $\bra{\psi} \hat{x}\ket{\psi} = |Bx| - \frac{m}{2}$, sampling from this state will give outcomes with large value of $|Bx|$.
This is exactly what the DQI algorithm does~\cite{dqi}. We prepare the Dicke state $\ket{D_k}$, apply $B^T$ in superposition, and uncompute the first register, to prepare the Fourier transform of $\ket{\lambda_k}$:
$$
\frac{1}{\sqrt{ {m \choose k}}} \sum_{z \in \mathbb{F}_2^m; |z|=k} \ket{B^T z}\,.
$$
This is possible whenever we can efficiently decode $k$ errors from the code with parity matrix $B^T$. 
DQI then takes a linear combination of preparable oscillator eigenstates with a large value of $\hat{x}$.

The harmonic oscillator perspective of \Cref{lemma:position_operator_is_bosonic} makes it easy to see what kinds of DQI states are possible. Assuming we can decode $\ell$ errors, we can create states concentrated at essentially any position $\hat{x}$ in $[-\sqrt{\ell(m-\ell)}, \sqrt{\ell(m-\ell)}]$. This can be done by preparing states that are near-eigenvectors of $\hat{x}$ in the $m \to \infty$ limit. See \Cref{fig:codewords_with_dqi} for a cartoon.
\begin{claim}
\label{claim:concentrated_qho_state}
Let $\{\ket{\lambda_i}\}$ be the harmonic oscillator states from \Cref{claim:obfuscated_qho} with parity check matrix $B^T: \mathbb{F}_2^m \to \mathbb{F}_2^n$ and distance $d^\perp$. Consider the state
    $$
    \ket{\psi} = \frac{1}{\sqrt{r-1}} \sum_{i=1}^{r-1} \ket{\lambda_{\ell+i}}\,,
    $$
    where $\ell  + r < d^\perp/2 - 1$.
    We take the limit $m \to \infty$, where $\ell / m$ is held constant, and $\omega(1) < r < o(\ell)$. Then $\ket{\psi}$ has $1 - o(1)$ support on eigenvectors of
    $$
    \hat{x} \defeq -\frac{1}{2} \sum_{x \in \mathbb{F}_2^n} \sum_{i=1}^m (-1)^{(B x)_i} \ketbra{x}
    $$
   with eigenvalue $(1 \pm o(1)) \cdot \sqrt{\ell(m-\ell)}$.
\end{claim}
\begin{proof}
We rely on \Cref{lemma:position_operator_is_bosonic}. Recall that $c_\ell \defeq \sqrt{\ell(m-\ell+1)}$. For any $1 \le i \le r$,
$$
\left| \frac{c_{\ell+1} - c_{\ell+i}}{c_{\ell+1}} \right|
\le \left| \frac{c_{\ell+1} - c_{\ell+r+1}}{c_{\ell+1}} \right|
= \left| 1 - \sqrt{\frac{(\ell+r+1)(m-\ell-r)}{{(\ell+1)(m-\ell)}}} \right| = o(1)\,.
$$
So then
\begin{align*}
2\hat{x} \ket{\lambda_{\ell+i}} = 
c_{\ell+i} \ket{\lambda_{\ell+i-1}} 
+ c_{\ell+i+1} \ket{\lambda_{\ell+i+1}} 
\end{align*}
is $1 - o(1)$ close to $c_{\ell+1} \left( \ket{\lambda_{\ell+i-1}}  + \ket{\lambda_{\ell+i+1}} \right)$. The effect of $\hat{x}$ on $\ket{\psi}$ is then $1 - o(1)$ close to 
\begin{align*}
 \frac{1}{2} c_{\ell+1}   \cdot \left(2 \ket{\psi} + \frac{1}{\sqrt{r-1}}  \ket{\lambda_{\ell}}  +  \frac{1}{\sqrt{r-1}} \ket{\lambda_{\ell+r}}   \right)
 \approx_{r \to \infty}  c_{\ell+1}  \ket{\psi}\,.
\end{align*}
So $\ket{\psi}$ has $1 -o(1)$ overlap with a state supported on eigenvectors of $\hat{x}$, all with eigenvalue $(1 \pm o(1)) \cdot c_{\ell + 1} \approx (1 \pm o(1)) \cdot \sqrt{\ell(m-\ell)}$. 
\end{proof}

\begin{figure}[t]
    \centering
\scalebox{1.2}{
    \begin{tikzpicture}[x=8cm,y=1cm,
        tick/.style={line width=1.2pt,draw=green!50!black},
        note/.style={font=\large}
    ]
    
    \draw[-{Latex[length=3pt]},line width=0.9pt,-] (0,0) -- (1,0);
    \fill[white,draw=black,line width=.9pt] (0,0) circle(2.2pt);
    \fill[white,draw=black,line width=.9pt] (1,0) circle(2.2pt);
    
    \node[below=6pt] at (0,0) {\scriptsize 0};
    \node[below=6pt] at (1,0) {\scriptsize $m$};
    \node[below=15pt] at (0.2,0) {\scriptsize distance from $v$};
    \node[below=4pt] at (0.5,0) {\scriptsize $m/2$};
    
    \foreach \x in {0.28,0.32,0.35,0.38,0.4,0.42,0.43,0.44,0.45,0.46,0.47,0.48,0.49,0.50,0.51,0.52,
                    0.53,0.54,0.55,0.56,0.57,0.58,0.60,0.62,0.65,0.68,0.72}
      \draw[tick] (\x,-0.11) -- (\x,0.11);

    \foreach \x in {0.14,0.18,0.82,0.95}
      \draw[tick,draw=black] (\x,-0.11) -- (\x,0.11);
      
    \draw[draw=blue!70!red,pattern=north east lines,pattern color=blue!70!red, line width=1pt,rounded corners=1pt]
          (0.37,0.2) rectangle (0.63,-0.2);
    \node[note, text=blue!70!red,anchor=south] at (0.50,0.35) {codewords found by DQI};
    
    
    \end{tikzpicture}%
}
    
    \caption{\small By \Cref{claim:concentrated_qho_state}, DQI can find codewords at essentially any distance in $[\frac{m}{2} - \sqrt{\ell(m-\ell)}, \frac{m}{2} + \sqrt{\ell(m-\ell)}]$, where $\ell$ is the maximum number of efficiently decodable errors in the dual code. Using classical linear programming techniques (i.e. \Cref{sec:codewords_concentric_hamming_balls}), we can predict the existence of codewords in this region all the way up to the decoding threshold $d^\perp/2$; these codewords are displayed in green. We do not know how to find these codewords classically.} 
    \label{fig:codewords_with_dqi}
\end{figure}

It is now clear how the harmonic oscillator states \textit{themselves} can encode information. In DQI, for example, they encode good approximate solutions to the satisfiability problem $Bx = \vec{1}$.
Even though the obfuscated harmonic oscillator can be analyzed classically, preparing states of this oscillator (which we then sample from) seems to only be possible quantumly.
In a quantum computer, this can be done by coherently error-correcting in the Fourier basis. It is not obvious how to dequantize this procedure.

Our discussion thus far is restricted to $v = \vec{1}$ and binary codes. Our claims generalize to arbitrary $v$, and almost certainly to arbitrary finite fields $\mathbb{F}_p$.

\subsection{Beyond DQI}
Here we have shown how DQI can be formulated as maximizing a parameter (\textit{position}) of the quantum harmonic oscillator, when constrained by another parameter (\textit{energy}) of the oscillator. This viewpoint suggests the possibility of systematically generating DQI-like algorithms, starting from a physics perspective. 

For example, we may use a higher-dimensional oscillator. In two dimensions, the eigenfunctions are no longer Hermite polynomials, but Laguerre polynomials. One may ask how to combine a finite set of Laguerre polynomials in a way that maximizes the radial coordinate, subject to an energy constraint.
Future work should ``reverse-engineer'' this form of DQI by discretizing the physical system, and understanding what the associated optimization task corresponds to. This naturally generalizes to higher dimensions. Other generalizations starting from physics principles include:
\begin{itemize}
    \item Similarly, we may instead use a different potential in the oscillator, like a quartic potential or Lennard-Jones for the radial coordinates.
    \item We may try to maximize another quantity in the oscillator, such as \emph{momentum}.
    \item We may use another, perhaps nonlinear, obfuscating map to store information in the oscillator.
    \item We may use an oscillator with different underlying algebra. DQI over $\mathbb{F}_p$ uses $su(2)$.
\end{itemize}
In each setting, it is worthwhile to find the analogous optimization problem and quantum algorithm. If DQI offers quantum advantage, it is plausible that quantum advantage exists here as well.


\section{Discussion}
\label{sec:discussion}
\subsection{Related work}
\paragraph{$\BQP$ and coding theory} 
One of the earliest $\BQP$-complete problems is an extension of the \textit{weight enumerator polynomial} from classical coding theory~\cite{Knill2001}.
The weight enumerator polynomial features prominently in the MacWilliams identity relating the distribution of codewords in $C$ and in $C^\perp$.
This $\BQP$-complete problem can be used to evaluate the partition function of some physical models~\cite{Geraci2008,geraci2010classical}.
Evaluating the weight enumerator polynomial itself is $\PP$-complete~\cite{vyalyi2003hardness}.
One can generalize Shor's algorithm to quantumly estimate some Gauss sums over finite fields~\cite{van2002efficient}. There are several \textit{quantum} variants of the MacWilliams identity used in the study of quantum codes; see for example~\cite{shor1996quantummacwilliamsidentities,huber2018bounds,burchards2025continuousvariablequantummacwilliamsidentities}.

\paragraph{DQI and Regev's reduction}
The ideas behind DQI arguably start with Regev's quantum reduction from the Shortest Vector Problem (SVP) to a problem known as \textit{Learning With Errors} (LWE)~\cite{aharonov2003latticeproblemquantumnp,regev2009lattices}.
The central idea in the reduction is that decoding noisy codewords allows one to find short codewords in the \textit{dual} code (which can be viewed as a lattice).
This technique has been refined to error-correct in \textit{superposition} (e.g. \cite{Chen2022,DebrisAlazard2024}), and even used to construct an $\NP$-search problem that separates $\BQP$ from $\BPP$ relative to a random oracle~\cite{yamakawa2024verifiable}.
Regev's reduction has been partially dequantized, for example when there is a \textit{unique} shortest vector~\cite{peikert2009public,lyubashevsky2009bounded,brakerski2013classical}; see also the survey~\cite{bennett2023complexity}. However, the full reduction has resisted attempts at dequantization.
The DQI algorithm~\cite{dqi} can be viewed as implementing a worst-case variant of the reduction proposed in~\cite{Chailloux:2024dll}.
DQI has been analyzed in some more realistic scenarios, for example with noise~\cite{bu2025decodedquantuminterferometrynoise}, with imperfect decoding~\cite{ct_softdecoders}, and with optimized gate count~\cite{khattar2025verifiable}. For several special cases, DQI is known not to offer quantum advantage~\cite{kothari2025no,parekh2025no,anschuetz2025decoded}, and was generalized in~\cite{schmidhuber2025hamiltonian,sahai_yz}.

\paragraph{Linear programming bounds for coding theory}
Two linear programming bounds comparing the rate and distance of a code were given in~\cite{mceliece1977new}. These arguments rely on a duality argument implicit in the MacWilliams identity, which can be generalized using Delsarte's theory of \textit{association schemes}~\cite{delsarte1973algebraic}; see also the survey~\cite{delsarte2002association}.
The original linear programming bounds have been re-proven using graph theory~\cite{friedman2005generalized}, harmonic analysis~\cite{navon2009linear}, and geometric techniques~\cite{samorodnitsky2023one}.
The linear programs were generalized in~\cite{coregliano2021complete,loyfer2022new,chailloux2024newsolutionsdelsartesdual}.
This duality argument can also be used to bound the rate of a code by the distance of its \textit{dual} code~\cite{tietavainen,sole1995packing}, which we use in this work.
It seems that the weight distribution of linear codes is intimately related to the Kravchuk polynomials~\cite{krawtchouk1929generalisation}; see for example~\cite{samorodnitsky2024weight}.

\paragraph{Quantum harmonic oscillator and Hermite transform}
The eigenstates of a quantum harmonic oscillator on the real line are Hermite polynomials.
It was recently shown that a quantum computer can sample a function by its Hermite coefficients~\cite{iyer2025quantum,jain2025efficient}, leading to polynomial quantum query speedups.
Representing a harmonic oscillator on a digital quantum computer requires a discretization, e.g. \cite{atakishiev1990difference,santhanam2009discrete, somma2016quantumsimulationsdimensionalquantum}.
One choice~\cite{hakioglu2000canonical,chauleur2024discrete} uses the Kravchuk polynomials~\cite{krawtchouk1929generalisation},
and induces a natural physical interpretation~\cite{savva2014integrablemodelsquantummedia,feinsilver2016krawtchoukgriffithssystemsimatrix,feinsilver2016krawtchoukgriffithssystemsiibernoulli}. \cite{dunkl1976krawtchouk} studied the group structure of the Kravchuk polynomials, and their relationship to the Fourier transform was studied in \cite{atakishiyev1997fractional,atakishiyev1999continuous}. Encoding \textit{quantum} information in an oscillator is the core idea of some constructions of \textit{quantum} codes; see for example~\cite{Gottesman_2001,albert2020robust}.

\subsection{Outlook}
\label{sect:outlook}
Outside of the oracle setting, it is hard to justify that a quantum algorithm solves a problem faster than any classical algorithm. 
For example, consider Shor's algorithm: the best evidence that discrete log is not in $\BPP$ is the widespread use of RSA.
Nonetheless, we wish to understand the principles behind algorithms that seemingly offer quantum advantage.

Towards that end, in this work we gave evidence that \textit{simulating} the DQI algorithm is classically hard, despite its lack of a quantum-supremacy-style argument. 
We have shown how DQI hides the subset of codewords at the semicircle law distance; just as in Shor's algorithm, this is an exponentially large set, but an exponentially small fraction of outcomes.
We suggest that DQI and Shor's algorithm have a stronger connection than is currently known. Perhaps there is an interesting quantum circuit framework that unifies these algorithms while also contained in $\BPP^{\NP}$. 

We do not claim that the \textit{optimization} problem DQI solves is classically hard. 
For example, it is not even clear if the solutions output by DQI are locally optimal.\footnote{It is possible that solutions output by DQI could be improved with a \emph{local update}: if changing a bit improves the solution, change this bit with some probability. For example, it is possible to locally improve on the Goemans-Williamson algorithm for Maximum Cut when the input is a random graph~\cite{hastings2021classicalalgorithmbeatsfrac12frac2pifrac1sqrtd}.} 
However, if a classical algorithm can outperform DQI, we suggest that it cannot \textit{directly} simulate DQI. This is in contrast with some proposals for quantum machine learning, which were dequantized by classically simulating the algorithm~\cite{tang_dequantization,dequantizing_via_sampling_framework}.

We also note the similarity to another quantum algorithm for approximately solving optimization problems, namely the QAOA~\cite{farhi2014}. Although the QAOA can create distributions that are impossible to classically sample from (assuming the $\PH$ does not collapse)~\cite{farhi2016quantum}, we still do not know whether it offers quantum advantage \textit{for optimization problems}. 
Nonetheless, the QAOA's claims of quantum advantage (e.g.~\cite{farhi2015quantum,basso2022,farhi2025lower}) have spurred many improvements in classical optimization algorithms (e.g.~\cite{barak2015beating,hastings2019classical,chen2023local}). 
We are hopeful that DQI can present a similar challenge in classical algorithms research. 

We conclude with some direct questions about quantum advantage inspired by our findings:
\begin{enumerate}
    \item We have shown how DQI realizes a known existential result proven using a Fourier transform. Do similar results occur elsewhere in coding theory; if so, can they be made into quantum algorithms? Can one construct DQI-like algorithms that implement more general transforms, such as transforms associated with representations of non-abelian groups~\cite{larocca2025quantum,quantumfire}?
    \item We have shown how a quantum computer can contain information in the states of a harmonic oscillator. In DQI, this information is a specific subset of codewords. Can other information, perhaps beyond coding theory, be stored in a harmonic oscillator? Can one use this framework to learn properties of a known hash function that are hard to learn classically?
    \item We can understand DQI as locating a hidden subset of codewords with some linear-algebraic structure. Is there a \textit{group} we can associate with this structure, perhaps related to the Kravchuk polynomials~\cite{dunkl1976krawtchouk}? In general, what minimal properties are required in a subset $S$ for a quantum computer to successfully locate it?
\end{enumerate}

\section*{Acknowledgements}
Thanks to Eric R. Anschuetz, Noah Shutty, and anonymous reviewers for comments on a draft of this manuscript.
V.H. thanks Sergey Bravyi for discussions. K.M. thanks Tony Metger and Saleh Naghdi for their talks on DQI at Columbia University. 

B.F. and K.M. acknowledge support from AFOSR (FA9550-21-1-0008). This
material is based upon work partially supported by the National Science Foundation under Grant CCF-2044923 (CAREER), by the U.S. Department of Energy, Office of Science, National Quantum Information Science Research Centers (Q-NEXT) and by the DOE QuantISED grant DE-SC0020360.
K.M acknowledges support from the NSF Graduate Research Fellowship under Grant No.\ 2140001.

\printbibliography

\clearpage
\newpage

\appendix

\section{Deferred proofs}
\label{sec:mathfacts}
\begin{definition}
\label{defn:elemsympoly}
The $k^{\text{th}}$ elementary symmetric polynomial $P^{(k)}$ on $m$ inputs is defined as
$$
P^{(k)} \left( \sum_{i=1}^m a_i\right) \defeq P^{(k)}(a_1, \dots, a_m) = \sum_{i_1, \dots, i_k \textnormal{ distinct}} a_{i_1} \times \cdots \times a_{i_k}\,.
$$
\end{definition}
\begin{fact}
\label{fact:count_elementary_sym_polynomials}
Consider a set of $\{a_1, \dots, a_m\}$, where $a_i \in \{\pm 1\}$. Then the degree-$k$ elementary symmetric polynomial $P^{(k)}$ on these inputs is a Kravchuk polynomial; i.e. 
\begin{align*}
P^{(k)}(a_1, \dots, a_m) =  K_k \left(\frac{m-\sum_{i=1}^m a_i}{2}; m \right)\,.
\end{align*}
\end{fact}
\begin{proof}
For notational convenience, define $e_k \defeq P^{(k)}(a_1, \dots, a_m)$.  Let $p_k \defeq \sum_{i=1}^m a_i^k$ be the degree-$k$ ``power sum polynomial'' of $\{a_1, \dots, a_m\}$. Newton's identities relate these two types of polynomials:
$$
k \cdot e_k = \sum_{i=1}^k (-1)^{i-1} \cdot e_{k-i} \cdot p_i\,.
$$
Moreover, if we define the exponential $E(t) = \sum_{k\ge 0} e_k t^k = \prod_{i=1}^m (1 + a_i \cdot t)$, then\footnote{This identity can be found on \href{https://en.wikipedia.org/wiki/Newton's_identities\#Expressing_elementary_symmetric_polynomials_in_terms_of_power_sums}{Wikipedia}; one proof is counting via \href{https://en.wikipedia.org/wiki/Bell_polynomials}{Bell polynomials}.} 
$$
E(t) = \exp( \sum_{k \ge 1} (-1)^{k-1} \frac{p_k}{k} t^k)\,.
$$
When $a_i \in \{\pm 1\}$, the power sum polynomials have only two values: $p_{2d+1} = p_1 = \sum_{i=1}^m a_i$, and $p_{2d} = p_0 = m$. Then
\begin{align*}
 \sum_{k \ge 1} (-1)^{k-1} \frac{p_k}{k} t^k 
&= \left( p_1 \sum_{d \ge 1} \frac{t^{2d-1}}{2d-1}\right) - \left( m \sum_{d \ge 1} \frac{t^{2d}}{2d}\right)
 \\
 &= p_1 \text{ arctanh}(t) - m \left(\frac{-1}{2} \ln(1 - t^2)\right)
 \\
 &= \frac{p_1}{2} \left(\ln(1+t)-\ln(1-t)\right) + \frac{m}{2} \left(\ln(1+t) + \ln(1-t)\right)
 \\
&= \frac{m+p_1}{2} \ln (1+t) + \frac{m-p_1}{2}\ln(1-t)\,.
\end{align*}
So then $\sum_{k \ge 0} e_k t^k = (1+t)^{(m+p_1)/2} \cdot (1-t)^{(m-p_1)/2}$. The coefficients of $t^k$ define the values of $e_k$:
$$
e_k = \sum_{i=0}^k (-1)^{i} {\frac{m-p_1}{2} \choose i} {\frac{m+p_1}{2} \choose k-i}\,.
$$
The proof follows by definition of Kravchuk polynomial (\Cref{defn:krav}).
\end{proof}

\begin{lemma}[{restatement of \Cref{lemma:position_operator_is_bosonic}}]
    Let $\{\ket{\lambda_i}\}$ be the harmonic oscillator states from \Cref{claim:obfuscated_qho} with parity check matrix $B^T: \mathbb{F}_2^m \to \mathbb{F}_2^n$ and distance $d^\perp$. Let $c_\ell \defeq \sqrt{\ell(m-\ell+1)}$. Then the operator
       $$
    \hat{x} \defeq -\frac{1}{2} \sum_{x \in \mathbb{F}_2^n} \sum_{i=1}^m (-1)^{(B x)_i} \ketbra{x}\,,
    $$ 
    satisfies $-2\hat{x} \ket{\lambda_k} = c_k \ket{\lambda_{k-1}} + c_{k+1} \ket{\lambda_{k+1}}$ for all $k < d^\perp/2 - 1$.
\end{lemma}
\begin{proof}
    We rewrite $\hat{x}$ using the Pauli operator $Z = \ket{0}\bra{0} - \ket{1}\bra{1} = \sum_{a} (-1)^a \ket{a}\bra{a}$:
$$
-2\hat{x} 
= \sum_{i=1}^m \sum_x \bigotimes_{j=1}^n \left( (-1)^{B_{ij} x_j} \ket{x_j}\bra{x_j}\right) 
= \sum_{i=1}^m  \prod_{j=1}^n Z_j^{B_{ij}}\,.
$$
The action of $\hat{x}$ on $\ket{\lambda_k}$ is exactly the action of $H_2^{\\otimes n} \hat{x} H_2^{\dagger \otimes n}$ on $H_2^{\otimes n} \ket{\lambda_k}$. The operator is
$$
H^{\otimes n} (-2\hat{x}) H^{\otimes n} = \sum_{i=1}^m \prod_{j=1}^n H_j Z_j^{B_{ij}} H_j = \sum_{i=1}^m (-1)^{v_i} \prod_{j=1}^n X_j^{B_{ij}}\,.
$$
The state has exactly the form from \cref{eqn:oscillator}.
We then apply the operator and get
\begin{align*}
    H^{\otimes n} (-2\hat{x}) \ket{\lambda_k} &= \frac{1}{\sqrt{ {m \choose k}}} \sum_{|z|=k} \sum_{i=1}^m \left( \prod_{j=1}^n X_j^{B_{ij}} \right)\ket{B^T z}
    \\
    &= \frac{1}{\sqrt{ {m \choose k}}} \sum_{|z|=k} \sum_{i=1}^m \ket{B^T z + b_i}
    \\
    &= \frac{1}{\sqrt{ {m \choose k}}} \sum_{|z|=k} \ket{B^T (z + e_i)}\,,
\end{align*}
where $b_i$ is the $i^\text{th}$ row of $B$, and $e_i \in \mathbb{F}_2^m$ is the basis vector for coordinate $i$.

\begin{itemize}
\item We first consider $k = 0$. Then this is equal to $m \cdot \sum_{|y|=1} \ket{B^T y} = \sqrt{m} \cdot H^{\otimes n} \ket{\lambda_1}$.
    \item Now consider $1 \le k < d^{\perp}/2 - 1$. Then $(z+e_i)$ will have Hamming weight either $k-1$ or $k+1$. Every vector of weight $k-1$ will be counted $m-(k-1)$ times; every vector of weight $k+1$ will be counted $k+1$ times. So
\begin{align*}
    H^{\otimes n} (-2\hat{x}) \ket{\lambda_k} &= \frac{1}{\sqrt{ {m \choose k}}} \left( (m-(k-1)) \sum_{|z|=k-1} \ket{B^T z} + (k+1)\sum_{|z|=k+1} \ket{B^T z}\right)\,.
\end{align*}
The first term is $H^{\otimes n} \ket{\lambda_{k-1}}$ up to normalization.  When $k < d^\perp/2 - 1$, the second term is $H^{\otimes n} \ket{\lambda_{k+1}}$ up to normalization. Recall that ${m \choose k} = \frac{m-k+1}{k} {m \choose k-1} = \frac{k+1}{m-k} {m \choose k+1}$. Then
\begin{align*}
H^{\otimes n} (-2\hat{x}) \ket{\lambda_k}  &= \sqrt{\frac{k}{m-k+1}} (m-(k-1))  H^{\otimes n}  \ket{\lambda_{k-1}} + \sqrt{\frac{m-k}{k+1}}(k+1) H^{\otimes n} \ket{\lambda_{k+1}}
\\
&= H^{\otimes n} \left( c_k \ket{\lambda_{k-1}} + c_{k+1} \ket{\lambda_{k+1}} \right)\,,
\end{align*}
where $c_\ell \defeq \sqrt{\ell(m-\ell+1)}$.
\end{itemize}
We combine these cases. Applying $H^{\otimes n}$ to both sides, we get
$
-2\hat{x} \ket{\lambda_k} = c_k \ket{\lambda_{k-1}} + c_{k+1} \ket{\lambda_{k+1}}
$.
\end{proof}

\clearpage
\newpage
\section{Computing the output probabilities of Shor's algorithm}
\label{sec:simulate_shor}
We briefly explain how Shor's algorithm for discrete logarithm prepares a state of the form 
$$
\sum_x w(x)\ket{x}\,,
$$
where $|w|^2$ is efficiently computable. By \Cref{thm:sampling_nonnegative_weights}, this state can be efficiently sampled given a classical machine with access to an $\NP$ oracle.

In the discrete logarithm problem, we are given an element $x$ of a cyclic group $G = \langle g \rangle$, and we wish to calculate the smallest integer $\alpha$ such that $g^\alpha = x$. We denote $\log_g x \defeq \alpha$. Suppose we know $N$ is the order of $G$; i.e. $G = \{1, g, g^2, \dots, g^{N-1}\}$. In general $N$ is exponential in the system size.

Shor's algorithm (see \cite{childs2017lecture} for a review) prepares the state
\begin{align*}
    \frac{1}{N} \sum_{\delta, \nu \in \mathbb{Z}_N}\omega_N^{\nu \delta} \ket{\nu \log_g x, \nu, g^\delta}\,.
\end{align*}
Suppose we are given the outcome $\ket{a,b,c}$. Then the probability of observing this outcome is nonzero if and only if $a = b \cdot \log_g x$, i.e.
$$
g^{a} = x^b\,.
$$
There are $N^3$ possible outcomes $\ket{a,b,c}$, so the probability of any nonzero outcome is exactly $\frac{1}{N^2}$. Then the squared amplitude $|w|^2$ is efficiently computable:
$$
|w(a,b,c)|^2 = \frac{1}{N^2} \cdot \mathbf{1}_{g^a = x^b}\,.
$$
Note that this value is much larger than the average probability $\frac{1}{N^3}$, but much smaller than any inverse polynomial in system size.

\clearpage
\newpage

\section{Existence of codewords in the semicircle law region}
\label{sec:codewords_concentric_hamming_balls}
We extend the techniques of \Cref{sec:semicirclelaw} to show that for any point $v \in \mathbb{F}_2^m$, codewords exist not only \emph{at} the semicircle law distance $\alpha$ from $v$, but at essentially every distance from $v$ in $[\alpha, m-\alpha]$. This relies on the following fact:
\begin{fact}[{e.g. \cite[Proposition 5.1]{coleman2011krawtchouk}}]
\label{fact:krav_simpleroot}
For any $1 \le k \le m$, the Kravchuk polynomial $K_k(\cdot;m)$ has $k$ distinct roots.
\end{fact}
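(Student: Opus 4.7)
The strategy is the classical sign-change argument for families of orthogonal polynomials, specialized to the discrete Kravchuk family. First I would note that \Cref{fact:krav_orthogonal} tells us $\{K_0(\cdot;m),\dots,K_m(\cdot;m)\}$ are orthogonal with respect to the positive weight $w(i)=\binom{m}{i}$ on the support $\{0,1,\dots,m\}$. A direct computation from \Cref{defn:krav} shows $K_k(x;m)$ is a polynomial in $x$ of degree exactly $k$: from each summand $(-1)^i\binom{j}{i}\binom{m-j}{k-i}$, the leading coefficient in $j$ is $\frac{(-1)^k}{i!(k-i)!}$, and summing over $i$ yields $\frac{(-2)^k}{k!}\ne 0$. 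In particular $K_k$ has at most $k$ complex roots, and it suffices to produce $k$ distinct real ones.

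Now let $s$ denote the number of sign changes in the finite sequence $(K_k(0;m),\dots,K_k(m;m))$, skipping any zeros; by the intermediate value theorem applied to $K_k$ as a real polynomial, each such change yields a root strictly between the two integers in question, so $K_k$ already has at least $s$ distinct real roots in $(0,m)$. Suppose for contradiction $s<k$. Pick one real number $\xi_j$ in each integer-gap where a sign change occurs, chosen \emph{strictly} between the two integers, and set $q(x)\defeq\prod_{j=1}^s(x-\xi_j)$, a polynomial of degree $s<k$. By construction $q$ flips sign at exactly the same integer gaps as $K_k$, so $q(i)K_k(i)$ has a single consistent sign across every support point $i\in\{0,\dots,m\}$ (with some possibly zero).

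The contradiction then comes from two opposing estimates of $S\defeq\sum_{i=0}^m\binom{m}{i}q(i)K_k(i)$. Since the $\xi_j$'s are non-integer by construction, $q(i)\ne 0$ at every $i\in\{0,\dots,m\}$; the common-sign terms then sum to $S\ne 0$ unless $K_k(i)=0$ at every $i$, which would give $m+1>k$ roots of a degree-$k$ polynomial, impossible. On the other hand, $\deg q<k$ means $q$ lies in the span of $K_0,\dots,K_{k-1}$, and \Cref{fact:krav_orthogonal} forces $S=0$. The contradiction shows $s\ge k$; combined with the degree bound, $K_k(\cdot\,;m)$ has exactly $k$ distinct, and hence simple, real roots, all in $[0,m]$.

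The step I expect to be the main obstacle is the careful choice of the $\xi_j$'s strictly inside integer gaps, as opposed to placing them at sign-change roots of $K_k$ directly. This is the only substantive difference from the continuous-measure version of the argument: in the discrete setting, ensuring $q$ never vanishes at a support point is what makes the positivity step $S\ne 0$ go through cleanly. Once this bookkeeping is in place, the proof reduces to the standard orthogonality-plus-sign-change template.
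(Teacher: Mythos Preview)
The paper does not give its own proof of this fact; it is stated as a citation to the literature (Proposition 5.1 of \cite{coleman2011krawtchouk}) and used as a black box. Your sign-change argument is the standard proof that any family of polynomials orthogonal with respect to a positive discrete weight has simple real roots, correctly specialized here via \Cref{fact:krav_orthogonal} and the leading-coefficient computation from \Cref{defn:krav}; it is correct and supplies a self-contained justification the paper omits. One cosmetic point: when a sign change in $(K_k(0),\dots,K_k(m))$ spans several consecutive integers because of intermediate zeros, ``the two integers in question'' is ambiguous, but your later use only needs each $\xi_j$ to be non-integer and to lie somewhere in the relevant interval, which is always arrangeable, so the argument goes through unchanged.
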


\Cref{claim:function_to_bound_coveringradius} gives an upper bound on the distance from any vector $v$ to a codeword. But with a minor tweak, it can also show the existence of codewords within a small \textit{range} of distances from $v$.
\begin{claim}
\label{claim:function_to_bound_hammingball_slice}
    Fix any binary linear code $C$ with dual distance $d^\perp$. Consider any $\beta(x) = \sum_{k=0}^{d^\perp - 1} \beta_k K_k(x)$ such that $\beta_0 > 0$ and $\beta(i) \le 0$ for all $i > j_2$ and all $i < j_1$. Then for any target vector $v$, there exists a codeword $c \in C$ at distance $|c-v| \in [j_1,j_2]$.
\end{claim}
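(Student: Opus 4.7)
The plan is to follow the proof of \Cref{claim:function_to_bound_coveringradius} essentially verbatim, simply using the additional hypothesis that $\beta(i) \le 0$ also for small $i$ to carve out a two-sided annular region.

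First, I would invoke the generalized MacWilliams identity (\Cref{fact:general_macwilliams}). Since $C$ has dual distance $d^\perp$, there is no nonzero $y \in \mathbb{F}_2^m$ with $|y| = k < d^\perp$ and $B^T y = 0$. Therefore, for each $1 \le k < d^\perp$,
\[
\sum_{i=0}^{m} W_i(v) K_k(i) \;=\; \sum_{y : |y|=k} (-1)^{v\cdot y}\cdot |C|\cdot \mathbf{1}_{B^T y=0} \;=\; 0.
\]
Combined with the trivial identity $\sum_i W_i(v) K_0(i) = |C|$, expanding $\beta(x) = \sum_{k=0}^{d^\perp-1}\beta_k K_k(x)$ and summing against $W_i(v)$ gives, exactly as in \cref{eqn:beta0_simplified},
\[
\beta_0 \;=\; \frac{1}{|C|}\sum_{i=0}^{m} W_i(v)\,\beta(i).
\]

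Next I would split the right-hand side into three ranges: $i < j_1$, $j_1 \le i \le j_2$, and $i > j_2$. By hypothesis, $\beta(i) \le 0$ on the outer two ranges, and $W_i(v) \ge 0$ everywhere. Hence
\[
0 \;<\; \beta_0 \;=\; \frac{1}{|C|}\sum_{i=0}^{m} W_i(v)\,\beta(i) \;\le\; \frac{1}{|C|}\sum_{i=j_1}^{j_2} W_i(v)\,\beta(i).
\]
Since the middle sum is strictly positive, at least one term must be nonzero, so there exists some $i \in [j_1,j_2]$ with $W_i(v) > 0$. By the definition of $W_i(v)$ from \Cref{fact:general_macwilliams}, this yields a codeword $c \in C$ at distance $|c-v| \in [j_1,j_2]$, as required.

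There is no real obstacle here; the argument is a cosmetic extension of \Cref{claim:function_to_bound_coveringradius}, where the one-sided assumption $\beta(i) \le 0$ for $i > j$ is replaced by a two-sided one and the middle range plays the role that $[0,j]$ played before. The only thing to be mindful of is that the generalized MacWilliams identity already handles arbitrary $v$ (the sign $(-1)^{v\cdot y}$ cancels out precisely because the indicator $\mathbf{1}_{B^T y=0}$ vanishes for $0 < |y| < d^\perp$), so no extra work is needed to move from the covering-radius statement to a statement about concentric Hamming shells.
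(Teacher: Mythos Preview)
Your proposal is correct and matches the paper's own proof, which explicitly says the argument ``exactly follows the proof of \Cref{claim:function_to_bound_coveringradius}, except that $i \in [j_1,j_2]$'' in the final inequality. The key identity $\beta_0 = \frac{1}{|C|}\sum_i W_i(v)\beta(i)$ and the sign-splitting you describe are precisely what is intended.
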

\begin{proof}[Proof sketch]
    Exactly follows the proof of \Cref{claim:function_to_bound_coveringradius}, except that $i \in [j_1, j_2]$ in \cref{eqn:bounding_betazero}.
\end{proof}
It remains to find good choice of $\beta(x)$ that satisfies the conditions of \Cref{claim:function_to_bound_hammingball_slice}. We propose:
\begin{claim}
\label{claim:beta_hammingball_slice}
    Fix an even $\ell$, $r \in [\ell - 1]$, and $\epsilon > 0$. Let $\alpha_1 \le  \dots \le \alpha_\ell$ be the roots of $K_\ell(x)$. Then
    $$
    \beta(x) = -(x-\alpha_1)^2 \dots (x-\alpha_{r-1})^2 (x-\alpha_r + \epsilon)(x-\alpha_{r+1}-\epsilon) (x-\alpha_{r+2})^2 \dots (x-\alpha_\ell)^2\,,
    $$
    has $\beta_0 > 0$, and $\beta(i) \le 0$ for all $i > \alpha_r-\epsilon$ and $i < \alpha_{r+1}+\epsilon$.
\end{claim}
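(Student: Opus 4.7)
The plan is to mimic the perturbative argument in the proof of \Cref{fact:tietavainen_fn} essentially line for line: the only change is that the single unsquared factor $-(x - \alpha_1 - \epsilon)$ is replaced by a product of two unsquared factors $(x - \alpha_r + \epsilon)(x - \alpha_{r+1} - \epsilon)$, with one extra squared factor replaced by this product. For the sign analysis, each squared factor $(x - \alpha_s)^2$ is non-negative, so the sign of $\beta(i)$ is controlled entirely by the overall $-$ together with the product of the two unsquared linear factors. Ordering the roots $\alpha_1 < \alpha_2 < \dots < \alpha_\ell$, this product is negative exactly on $(\alpha_r - \epsilon,\, \alpha_{r+1} + \epsilon)$ and non-negative outside. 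Hence $\beta(i) \le 0$ for every integer $i$ outside the interval $[\alpha_r - \epsilon,\, \alpha_{r+1} + \epsilon]$, which is the required sign condition needed to feed into \Cref{claim:function_to_bound_hammingball_slice}.

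To prove $\beta_0 > 0$, I would start from the identity $2^m \beta_0 = \sum_{i=0}^m \binom{m}{i} \beta(i)$, which follows by taking $K_0 \equiv 1$ and applying \Cref{fact:krav_orthogonal} to the Kravchuk expansion of $\beta$. Define $R(x) \defeq \prod_{s \notin \{r, r+1\}} (x - \alpha_s)$, a polynomial of degree $\ell - 2$. At $\epsilon = 0$ the polynomial factors as
\[
\beta(x)\big|_{\epsilon = 0} = -(x - \alpha_r)(x - \alpha_{r+1}) R(x)^2 = -\frac{1}{k_\ell}\, K_\ell(x) \cdot R(x),
\]
where $k_\ell$ is the leading coefficient of $K_\ell$. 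Since $\deg R = \ell - 2 < \ell$, expanding $R$ in the Kravchuk basis $\{K_0, \dots, K_{\ell-2}\}$ and applying \Cref{fact:krav_orthogonal} kills every term, giving $\beta_0\big|_{\epsilon = 0} = 0$.

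Now differentiate in $\epsilon$. A direct calculation using $\beta(x) = -R(x)^2 (x - \alpha_r + \epsilon)(x - \alpha_{r+1} - \epsilon)$ gives
\[
\partial_\epsilon \beta(x) = R(x)^2 \,\bigl(\alpha_{r+1} - \alpha_r + 2\epsilon\bigr),
\]
which is non-negative everywhere and strictly positive except at the at most $\ell - 2$ integer roots of $R$. Summing against $\binom{m}{i}$, since $R$ is a nonzero polynomial of degree $\ell - 2 < m$ it cannot vanish at every integer in $\{0, 1, \dots, m\}$, so $\partial_\epsilon \beta_0 > 0$ for every $\epsilon \ge 0$. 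Integrating from $\epsilon = 0$, where $\beta_0 = 0$, yields $\beta_0(\epsilon) > 0$ for all $\epsilon > 0$.

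The only real obstacle is the bookkeeping: one has to arrange the two replaced factors so that $\beta|_{\epsilon=0}$ still has $K_\ell(x)$ as a factor, and so that the remaining polynomial has degree strictly less than $\ell$ to trigger Kravchuk orthogonality. The condition $\ell$ even does not seem to play a role in this particular argument; it presumably appears to make the overall degree $2\ell - 2$ compatible with the degree bound $< d^\perp$ required by \Cref{claim:function_to_bound_hammingball_slice} when applied downstream.
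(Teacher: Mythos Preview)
Your proof is correct and follows essentially the same argument as the paper: the sign analysis via the two unsquared linear factors, the vanishing of $\beta_0$ at $\epsilon=0$ by writing $\beta|_{\epsilon=0}$ as $K_\ell$ times a polynomial of degree $\ell-2$ and invoking \Cref{fact:krav_orthogonal}, and the positivity of $\partial_\epsilon \beta_0$ from $\alpha_{r+1}>\alpha_r$ together with the nonvanishing of $\sum_i \binom{m}{i} R(i)^2$. Your added remark that the hypothesis ``$\ell$ even'' plays no role in this particular argument is also accurate.
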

\begin{proof}
    By construction, $\beta(x)$ is positive only when $-(x-\alpha_r + \epsilon)(x-\alpha_{r+1}-\epsilon)$ is positive. This only occurs when $x \in [\alpha_r - \epsilon, \alpha_{r + 1} + \epsilon]$.

    Now we prove $\beta_0 > 0$. By \Cref{fact:krav_orthogonal}, when $\epsilon = 0$, 
    $$
    \beta_0 = \sum_{i=0}^m {m \choose i} \beta(i) = \sum_{i=0}^m {m \choose i} K_{\ell}(i) \cdot \frac{K_{\ell}(i)}{(x-\alpha_r)(x-\alpha_{r+1})}\,.
    $$
    Again by \Cref{fact:krav_orthogonal}, this must be zero, since the second term has degree strictly less than $\ell$.

    We take a derivative of $\beta_0$ with respect to $\epsilon$. This value is
    \begin{align*}
        &\sum_{i=0}^m {m \choose i} \left[ (x-\alpha_r + \epsilon) - (x-\alpha_{r+1} - \epsilon) \right] \left( \prod_{s=1}^{r-1} (i-\alpha_s)^2 \right) \left( \prod_{s=r+2}^{\ell} (i-\alpha_s)^2 \right)
        \\
        &= \sum_{i=0}^m {m \choose i} \left[ (\alpha_{r+1} - \alpha_r) + 2\epsilon \right] \left( \prod_{s=1}^{r-1} (i-\alpha_s)^2 \right) \left( \prod_{s=r+2}^{\ell} (i-\alpha_s)^2 \right)\,.
    \end{align*}
    By \Cref{fact:krav_simpleroot}, the roots are all distinct, so $\alpha_{r+1} > \alpha_r$. Since $\epsilon \ge 0$, this value is positive, and so $\beta_0 > 0$ for positive $\epsilon$.
\end{proof}
Together, \Cref{claim:function_to_bound_hammingball_slice} and \Cref{claim:beta_hammingball_slice} imply there are codewords at distance between any two roots of the Kravchuk polynomial $K_\ell$, for any $2\ell + 1 < d^\perp$. As $m \to \infty$ but $\ell/m$ is held constant, these roots are in the range $[m/2 - \sqrt{\ell(m-\ell)}, m/2 + \sqrt{\ell(m-\ell)}]$ (e.g. \cite{ismail1998strong,kirshner2021moment}). Since the roots of $K_{\ell}$ and $K_{\ell-1}$ interlace~\cite{szego1939orthogonal,chihara1990zeros}, there are codewords at essentially every distance in the range $[\frac{m}{2}-\sqrt{\frac{d^\perp}{2}(m-\frac{d^\perp}{2})}, \frac{m}{2}+\sqrt{\frac{d^\perp}{2}(m-\frac{d^\perp}{2})}]$.

\end{document}